\pdfoutput=1
\RequirePackage[l2tabu,orthodox]{nag} 

\documentclass[sigconf,authorversion]{acmart}

\settopmatter{printacmref=false} 

\copyrightyear{2017}
\acmYear{2017}
\setcopyright{acmlicensed}
\acmConference{ISSAC '17}{July 25-28, 2017}{Kaiserslautern, Germany}
\acmPrice{15.00}
\acmDOI{10.1145/3087604.3087642}
\acmISBN{978-1-4503-5064-8/17/07}

\usepackage[english]{babel}  
\usepackage[shortlabels]{enumitem}
\usepackage{xspace}          
\usepackage{color}
\usepackage{stmaryrd} 

\usepackage{algorithm}
\usepackage[noend]{algpseudocode}

\algrenewcommand\alglinenumber[1]{{\scriptsize#1}}   
\algrenewcommand\algorithmicrequire{\textbf{Input:}} 
\algrenewcommand\algorithmicensure{\textbf{Output:}} 
\algnewcommand\algorithmicassume{\textbf{Requires:}} 
\algrenewcommand\algorithmicfunction{\textbf{Fun}}
\algnewcommand\Assume{\item[\algorithmicassume]}

\newcommand{\ass}{\leftarrow}

\usepackage[capitalize]{cleveref}
\makeatletter
\def\renewtheorem#1{%
  \expandafter\let\csname#1\endcsname\relax
  \expandafter\let\csname c@#1\endcsname\relax
  \gdef\renewtheorem@envname{#1}
  \renewtheorem@secpar
}
\def\renewtheorem@secpar{\@ifnextchar[{\renewtheorem@numberedlike}{\renewtheorem@nonumberedlike}}
\def\renewtheorem@numberedlike[#1]#2{\newtheorem{\renewtheorem@envname}[#1]{#2}}
\def\renewtheorem@nonumberedlike#1{
\def\renewtheorem@caption{#1}
\edef\renewtheorem@nowithin{\noexpand\newtheorem{\renewtheorem@envname}{\renewtheorem@caption}}
\renewtheorem@thirdpar
}
\def\renewtheorem@thirdpar{\@ifnextchar[{\renewtheorem@within}{\renewtheorem@nowithin}}
\def\renewtheorem@within[#1]{\renewtheorem@nowithin[#1]}
\makeatother
\theoremstyle{acmplain}
\renewtheorem{theorem}{Theorem}[section]
\renewtheorem{proposition}[theorem]{Proposition}
\renewtheorem{lemma}[theorem]{Lemma}
\renewtheorem{definition}[theorem]{Definition}
\newtheorem{problem}{Problem} 
\crefname{problem}{Problem}{Problems}
\Crefname{problem}{Problem}{Problems}






\newcommand{\ceil}[1]{\lceil #1 \rceil}

\newcommand\ZZ{\mathbb Z\xspace}
\newcommand\F{\mathbb F\xspace}

\newcommand\Osoft{O^{\scriptscriptstyle \sim}\!}

\newcommand{\K}{\mathbb{K}\xspace}
\newcommand{\Pox}{[\mkern-4.2mu[ x ]\mkern-4mu]}
\newcommand{\Kx}{\K\Pox}
\newcommand{\Pux}{\langle\mkern-4.2mu\langle x \rangle\mkern-4mu\rangle}
\newcommand{\R}{\Kx[y]}
\newcommand{\atzero}{_{|x=0}}
\newcommand{\valx}{v_x} 

\newcommand{\algoname}[1]{\ensuremath{\mathsf{#1}}\xspace}
\newcommand{\univRoots}{\mathsf{R_\K}\xspace}

\renewcommand{\le}{\leq}     
\renewcommand{\ge}{\geq}     
\DeclareMathOperator{\rem}{ rem }
\DeclareMathOperator{\quo}{ quo }
\newcommand{\val}{s}              
\newcommand{\prc}{d}              
\newcommand{\pol}{Q}              
\newcommand{\pdeg}{n}             
\newcommand{\polz}{\pol\atzero}   
\newcommand{\aff}{A}              
\newcommand{\coaff}{B}            
\newcommand{\coaffi}{C}            
\newcommand{\rmd}{R}              
\newcommand{\rt}{f}               
\newcommand{\rtpol}{f}            
\newcommand{\rtxpt}{t}            
\newcommand{\nrts}{\ell}          
\newcommand{\mult}{m}             
\newcommand{\alive}{\mathcal{I}}  
\newcommand{\ring}{\mathcal{A}}   
\newcommand{\field}{\K}           
\newcommand{\polRing}{\R}         
\newcommand{\ZZp}{\ZZ_{>0}}       
\newcommand{\NN}{\ZZ_{\ge 0}}     
\newcommand{\series}{\mathcal{S}}         
\newcommand{\trcSer}[1][\prc]{\series_{#1}}  
\newcommand{\trcSerPol}[1][\prc]{\trcSer[#1][y]}  
\newcommand{\storeArg}{} 
\newcommand{\rtset}[1][\pol]{\renewcommand\storeArg{#1}\rtsetAux} 
\newcommand{\rtsetAux}[1][\prc]{\mathcal{R}(\storeArg,#1)} 
\newcommand{\idealGen}[1]{(#1)}

\newcommand{\myparagraph}[1]{\paragraph{\hspace{-0.35cm}\textbf{#1}}}  


\begin{document}
\title[Fast Computation of the Roots of Polynomials Over the Ring of Power Series]{Fast Computation of the Roots of Polynomials \\ Over the Ring of Power Series}

\author{Vincent Neiger}
\affiliation{%
  \institution{Technical University of Denmark}
  \city{Kgs. Lyngby}
  \state{Denmark}
}
\email{vinn@dtu.dk}

\author{Johan Rosenkilde}
\affiliation{%
  \institution{Technical University of Denmark}
  \city{Kgs. Lyngby}
  \state{Denmark}
}
\email{jsrn@jsrn.dk}

\author{\'Eric Schost}
\affiliation{%
  \institution{University of Waterloo}
  \city{Waterloo ON}
  \state{Canada}
}
\email{eschost@uwaterloo.ca}

\begin{abstract}
  We give an algorithm for computing all roots of polynomials over a univariate power series ring over an exact field $\mathbb{K}$.
  More precisely, given a precision $d$, and a polynomial $Q$ whose coefficients are power series in $x$, the algorithm computes a representation of all power series $f(x)$ such that $Q(f(x)) = 0 \bmod x^d$.
  The algorithm works unconditionally, in particular also with multiple roots, where Newton iteration fails.
  Our main motivation comes from coding theory where instances of this problem arise and multiple roots must be handled.

  The cost bound for our algorithm matches the worst-case input and output size $d \deg(Q)$, up to logarithmic factors.
  This improves upon previous algorithms which were quadratic in at least one of $d$ and $\deg(Q)$.
  Our algorithm is a refinement of a divide \& conquer algorithm by Alekhnovich (2005), where the cost of recursive steps is better controlled via the computation of a factor of $Q$ which has a smaller degree while preserving the roots.
\end{abstract}

\keywords{Polynomial root-finding algorithm; power series; list decoding.}

\maketitle

\section{Introduction}
\label{sec:intro}

In what follows, $\K$ is an exact field, and $\polRing$ denotes the set of
polynomials in $y$ whose coefficients are power series in $x$ over $\field$.

\myparagraph{Problem and main result}

Given a polynomial in $\polRing$, we are interested in computing its power
series roots to some precision, as defined below.

\begin{definition}
  \label{dfn:root}
  Let $\pol \in \R$ and $\prc \in \ZZ_{>0}$. A power series $\rt \in
  \Kx$ is called a \emph{root of $\pol$ to precision $\prc$} if
  $\pol(\rt) = 0 \bmod x^\prc$; the set of all such roots is denoted
  by $\rtset$.
\end{definition}

Our main problem (\cref{pbm:series_root}) asks, given $\pol$ and $\prc$, to compute a finite
representation of $\rtset$; the fact that such a representation exists is
explained below (\cref{thm:modular_roots_partition}). In all the paper, 
we count operations in $\K$ at unit cost, and we use the soft-O notation
$\Osoft(\cdot)$ to give asymptotic bounds with hidden polylogarithmic factors.

\begin{center}
\fbox{ \begin{minipage}{8cm}
\begin{problem}
  \label{pbm:series_root}
~\\
\emph{Input:}
\begin{itemize}
  \setlength{\itemsep}{0pt}
  \item a precision $\prc \in \ZZp$,
  \item a polynomial $\pol \in \polRing$ known at precision $\prc$.
\end{itemize}

\emph{Output:}
  \begin{itemize}
    \item (finite) list of pairs
      $(\rtpol_i, \rtxpt_i)_{1 \le i \le \nrts} \subset \K[x] \times \NN$
      such that
      $\rtset \,=\, \bigcup_{1\le i\le \nrts} (\rtpol_i + x^{\rtxpt_i} \Kx)$
  \end{itemize}
\end{problem}
\end{minipage}
}
\end{center}

\medskip

An algorithm solving this problem must involve finding roots of
polynomials in $\K[y]$.
The existence, and complexity, of root-finding algorithms for
univariate polynomials over $\K$ depends on the nature of $\K$. In
this paper, we assume that $\K$ is such that we can find roots in $\K$
of a degree $n$ polynomial in $\K[y]$ in time $\univRoots(n)$, for some
function $\univRoots: \NN \to \mathbb{R}$; the underlying algorithm may
be deterministic or randomized. For instance, if $\K=\F_q$, we can
take $\univRoots(n) \in \Osoft(n)$ using either a Las Vegas algorithm (in
which case the runtime can be more precisely stated as
$\Osoft(n\log(q))$~\cite[Cor.\,14.16]{von_zur_gathen_modern_2013}),
or a deterministic one (with for instance a runtime
$\Osoft(nk^2\sqrt{p})$, where we write $q=p^k$, $p$
prime~\cite{Shoup91}). 

We now state our main result: we separate the cost of the root-finding part of
the algorithm, which may be randomized, and the rest of the algorithm which is
deterministic.

\begin{theorem}
  \label{thm:series_root}
  There is an algorithm which solves \cref{pbm:series_root} using
  $\Osoft(\prc\pdeg)$ deterministic operations in $\field$, together
  with an extra $O(\prc\univRoots(\pdeg))$ operations, where $\pdeg
  = \deg(\pol)$.
\end{theorem}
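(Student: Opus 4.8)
The plan is to take the divide \& conquer scheme of Alekhnovich and control the degree growth through the recursion. Throughout I rely on the structural description of $\rtset$ (the content of \cref{thm:modular_roots_partition}): the truncated Newton--Puiseux process builds the roots coefficient by coefficient along a tree; at a node one substitutes the partial root $\rt$ obtained so far and, after normalizing by the relevant power of $x$, reads off the admissible next coefficients as the roots in $\field$ of the reduction modulo $x$ of the resulting polynomial, while the $x$-adic valuation of its constant coefficient bounds the number of subsequently free coefficients. Hence every branch closes after at most $\prc$ steps, the tree has at most $\pdeg$ leaves, and one obtains the finite representation $\rtset = \bigcup_{1\le i\le\nrts}(\rtpol_i + x^{\rtxpt_i}\Kx)$ required by \cref{pbm:series_root}, together with a (too slow) recursive algorithm.

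To compute $\rtset$ fast I recurse on the precision. Set $h = \lceil \prc/2\rceil$, recursively compute a representation of the roots of $\pol$ to precision $h$, and for each returned pair $(\rt,\rtxpt)$ form the shift $\hat\pol(y) = \pol(\rt + x^{\rtxpt}y)$, normalize it by dividing out the largest power of $x$ common to its coefficients, and recurse on the result to the remaining precision $\prc-\rtxpt$ (or a suitable variant); splicing $\rt + x^{\rtxpt}(\cdot)$ together along the recursion yields the output. Correctness is immediate from the structural description, since $\rt + x^{\rtxpt}f$ is a root of $\pol$ to precision $\prc$ exactly when $f$ is a root of the normalized $\hat\pol$ to the appropriate precision. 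The recursion has depth $O(\log\prc)$, and at a fixed depth the sub-instances correspond to a partition of the at most $\pdeg$ roots, so there are at most $\pdeg$ nodes per level.

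The cost is won by interleaving a \emph{degree reduction} at every node. As it stands, the Taylor shift $\pol\mapsto\hat\pol$ keeps $\deg_y$ equal to $\pdeg$ although only a few of its roots are relevant, which is precisely why the plain approach is $\Osoft(\prc\pdeg^2)$. Instead, before recursing I replace the current polynomial $\pol$ by a divisor $\tilde\pol$ of it (to the current precision) whose $y$-degree equals the number of roots to be output below this node and which has the same roots to that precision. Such a $\tilde\pol$ is produced deterministically: Hensel-lift the factorization of $\pol \bmod x$, grouped by its distinct roots in $\field$, into pairwise coprime factors, discard the factor with no root in $\field$ (it contributes no power series root), and then strip off powers of $y$ and of $x$ as dictated by the Newton polygon of the shifted polynomial. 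Since $\pol$ is known only modulo $x^\prc$, the lifting and the divisions are carried out to that precision; with fast power series arithmetic (multiplication, division, Taylor shift, Hensel lifting) this costs $\Osoft(\prc\deg_y\pol)$ at the node. The decisive invariant is that the $y$-degrees of the local polynomials sum to at most $\pdeg$ on each level of the tree, so a level costs $\Osoft(\prc\pdeg)$ and the $O(\log\prc)$ levels give the announced $\Osoft(\prc\pdeg)$ deterministic bound. The only non-deterministic ingredient is the univariate root-finding over $\field$ used to read off leading coefficients; a direct count bounds the number of such calls by $O(\prc)$, each on a polynomial of degree at most $\pdeg$, whence the extra $O(\prc\univRoots(\pdeg))$ term.

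The main obstacle is the degree reduction itself: one must show that a divisor $\tilde\pol$ of the stated (minimal) $y$-degree exists and \emph{retains all roots to the required precision}, even though $\pol$ is only a truncation, and that it can be computed within the node's budget. This demands a simultaneous analysis of the Newton polygon of the shifted polynomial and of the $x$-adic precision at which each Hensel factor is meaningful, together with the accounting that makes the per-level degree-sum invariant hold, so that the reductions do not themselves dominate the running time. Everything else — the recursion on precision and the fast arithmetic performed inside a node — is routine on top of Alekhnovich's framework.
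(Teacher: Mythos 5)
Your high-level plan --- divide and conquer on the precision, combined with a degree reduction at each node via a Hensel-lifted factor that preserves the roots --- is indeed the strategy of the paper. But there is a genuine gap at exactly the point you defer to at the end, and it is more than an accounting issue: as written, your algorithm first \emph{forms the shift} $\hat\pol = \pol(\rt + x^{\rtxpt}y)$ and only afterwards replaces it by a low-degree divisor. Each shift is a degree-$\pdeg$ polynomial in $y$ with coefficients known to precision about $\prc$, there can be up to $\pdeg$ of them at a given level, and merely writing them down costs on the order of $\pdeg^2\prc$ field elements --- this is precisely why Alekhnovich's algorithm is quadratic in $\pdeg$, and no after-the-fact degree reduction can recover that cost. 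The paper's central technical contribution (\cref{alg:affine_factors}) is to compute the relevant factor of each shift \emph{directly from $\pol$, without ever computing the shifts}: the factor retained is the Weierstrass/affine factor $\aff_i$ of $\pol_i$ (\cref{thm:affine_factor}), of degree $\deg({\pol_i}\atzero)$, and the Hensel iteration is rearranged so that every intermediate quantity is a remainder modulo $\aff_i$ or $\aff_i^2$; these remainders are obtained for all $i$ simultaneously by reducing $\pol$ itself modulo transformed moduli of the same small degrees, using the identity $\pol(\rt_i+x^{\rtxpt_i}y)\rem\aff_i = (\pol\rem\hat\aff_i)(\rt_i+x^{\rtxpt_i}y)$ of \cref{alg:shifted_remaindering}, so that one pass costs $\Osoft(\prc(\pdeg+\sum_i\deg\aff_i))$. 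Your proposal contains no substitute for this mechanism, so the claimed per-node budget is not met.

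A second, related gap is the degree-sum invariant you invoke (``the $y$-degrees of the local polynomials sum to at most $\pdeg$ on each level''). This does not follow merely from the sub-instances partitioning the roots: one needs the quantitative facts that $\deg({\pol_i}\atzero)\le\mult_i$ where $\mult_i$ is the multiplicity of $(\rt_i,\rtxpt_i)$, that $\sum_i\mult_i\le\deg(\polz)$, and that these bounds are preserved under composition across levels. The paper isolates this as the notion of a \emph{reduced} root set (\cref{def:reduced_root}) and proves it via \cref{lem:onestep_roots_partition} and \cref{thm:modular_roots_partition}; your sketch gestures at the Newton polygon but establishes neither statement, and without them neither the degree bound on the divisor $\tilde\pol$ nor the per-level cost bound follows. (Two minor points: discarding the factor of $\polz$ without roots in $\K$ is sound but unnecessary, and it makes your ``deterministic'' reduction depend on root-finding; and one cannot simply strip off powers of $y$, since $y=0$ may be a genuine root of ${\pol_i}\atzero$ that must be retained.)
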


A cost in $\Osoft(\prc\pdeg)$ is essentially optimal for
\cref{pbm:series_root}. Indeed, if $Q=(y-f_1)\cdots (y-f_n)$, for some
power series $f_1,\dots,f_n$ such that $f_i-f_j$ is a unit for all $i \ne j$,
then the roots of $Q$ to precision $d$ are all the power series of the
form $f_i +x^d \Kx$, for some $i$. In this case, solving \cref{pbm:series_root}
involves computing all $f_i \bmod x^d$, which amounts to
$\prc\pdeg$ elements in $\K$.

\myparagraph{Previous work} 

When the discriminant of $Q \in \Kx[y]$ has $x$-valuation zero, or
equivalently, when all $y$-roots of $Q\atzero$ are simple (as in the
example above), our problem admits an obvious solution: first, compute
all $y$-roots of $Q\atzero$ in $\K$, say $y_1,\dots,y_\ell$, for some
$\ell \le n$, where $n = \deg Q$.  Then, apply Newton iteration to
each of these roots to lift them to power series roots
$f_1,\dots,f_\ell$ of precision $d$; to go from precision say $d/2$ to
$d$, Newton iteration replaces $f_i$ by
\[
  f_i - \frac{Q(f_i)}{Q'(f_i)} \bmod x^{d} \ ,
\]
where $Q' \in \Kx[y]$ is the formal derivative of $Q$.
The bottleneck of this approach is the evaluation of all $Q(f_i)$ and $Q'(f_i)$.
Using an algorithm
for fast multi-point evaluation in the ring of univariate polynomials
over $\Kx/\idealGen{x^{d}}$, these evaluations can both be done in $\Osoft(dn)$
operations in $\K$.  Taking all steps into account, we obtain the
roots $f_1,\dots,f_\ell$ modulo $x^d$ using $\Osoft(dn)$ operations in
$\K$; this is essentially optimal, as we pointed out above. In this case,
the total time for root-finding is $\univRoots(n)$. 

Thus, the non-trivial cases of Problem~\ref{pbm:series_root} arise
when $Q\atzero$ has multiple roots. In this case, leaving aside the
cost of root-finding, which is handled in a non-uniform way in
previous work, we are not aware of an algorithm with a cost similar to ours.
The best cost bounds known to us are $\Osoft(n^2d)$, obtained
in~\cite{alekhnovich_linear_2005} and with this cost estimate being showed
in~\cite{nielsen_sub-quadratic_2015}, and $\Osoft(nd^2)$,
obtained in~\cite{berthomieu_polynomial_2013}.

When $Q\atzero$ has multiple roots, a natural generalization of our
problem consists in computing Puiseux series solutions of $Q$. It is
then customary to consider a two-stage computation: first, compute
sufficiently many terms of the power series / Puiseux series solutions
in order to be able to {\em separate} the branches, then switch to
another algorithm to compute many terms efficiently.

Most algorithms for the first stage compute the so-called singular
parts of rational Puiseux expansions~\cite{Duval89} of the solutions.
They are inspired by what we will call the {\em Newton-Puiseux}
algorithm, that is, Newton's algorithmic proof that the field of
Puiseux series $\K\Pux$ is algebraically
closed when $\K$ is algebraically closed of characteristic
zero~\cite{Newton1736,Walker78}. In the case of Puiseux series roots,
one starts by reading off the leading exponent $\gamma$ of a possible
solution on the Newton polygon of the input equation $Q \in
\K\Pux [y]$.
The algorithm then considers $\hat{Q} = Q(x^\gamma y)/x^s \in \K\Pux[y]$, where $s$ is the valuation at $x$ of $Q(x^\gamma y)$.
If $y_1,\ldots,y_\ell$ are the $y$-roots of ${\hat{Q}}\atzero$, then these give the $x^\gamma$ terms of the Puiseux series roots of $Q$.
For each $i$ we then replace $Q$ with $Q(x^\gamma (y_i + y)) / x^{s'}$, where $s'$ is the valuation at $x$ of $Q(x^\gamma(y_i + y))$.
This allows us to compute the terms of the solutions one by one.
The best algorithms to date~\cite{PoRy11,PoRy15} use an expected number of $\Osoft(n^2 \nu + n^3 +n^2 \log(q))$ operations in $\K$, if $\K = \F_q$ and where $\nu$ is the valuation of the discriminant of $Q$.
These algorithms are randomized of the Las Vegas type, since they rely on Las Vegas root-finding in $\F_q[y]$.


In the second stage, given the singular parts of the solutions, it
becomes for instance possible to apply Newton iteration, as
in~\cite{KuTr78}. If $Q$ is actually in $\K[x][y]$, one may
alternatively derive from it a linear recurrence with polynomial
coefficients satisfied by the coefficients of the solutions we are
looking for; this allows us to compute them at precision $d$ using
$O(dn)$ operations, that is, in time genuinely linear in
$n,d$~\cite{ChCh86a,Chch86b} (keeping in mind that in both cases, we
may need to know about $\nu$ terms of the solutions before being able
to switch to the faster algorithm). We will discuss a similar observation
in the context of our algorithm, in \cref{sec:alg:roots}.


Using ideas akin to the Newton-Puiseux algorithm, Berthomieu, Lecerf,
and Quintin gave in~\cite{berthomieu_polynomial_2013} an algorithm that
computes roots of polynomials in $L[y]$, for a wide class of local
rings $L$. In the particular case $L=\F_q\llbracket
x\rrbracket$ with $q=p^s$, the expected runtime of their algorithm is
$\Osoft(n d^2 + n \log(q)+ nd \log(k)/p)$ operations in~$\F_q$.

Let us finally mention algorithms for polynomial factorization over
local fields. Using the Montes algorithm~\cite{Montes99},
it is proved in~\cite{BaNaSt13} that one can compute a so-called
OM-factorization of a degree $n$ polynomial $Q$ in $\F_q\Pux[y]$ at
precision $d$ using $\Osoft(n^2\nu+n \nu^2 + n\nu\log(q))$, where
$\nu$ is the valuation of the discriminant of $Q$; the relation to
\emph{basic root sets}, defined below, remains to be elucidated.


Sudan's and Guruswami-Sudan's algorithms for the list-decoding of
Reed-Solomon codes~\cite{sudan_decoding_1997,guruswami_improved_1999}
have inspired a large body of work, some of which is directly related
to Problem~\ref{pbm:series_root}. These algorithms operate in two stages: the
first stage finds a polynomial in $\K[x,y]$ with some constraints; the second one finds
its factors of the form $y-f(x)$, for $f$ in $\K[x]$.

The Newton-Puiseux algorithm can easily be adapted to compute such
factors; in this context, it becomes essentially what is known as the
Roth-Ruckenstein algorithm~\cite{roth_efficient_2000}; its cost
is in $O(d^2n^2)$, omitting the work for univariate root-finding.

In the context of Sudan's and Guruswami-Sudan's algorithms, we may
actually be able to use Newton iteration directly, by exploiting the
fact that we are looking for {\em polynomial} roots.  Instead of
computing power series solutions (that is, the Taylor expansions of
these polynomial roots at the origin), one can as well start from
another expansion point $x_0$ in $\K$; if the discriminant of $Q$ does
not vanish at $x_0$, Newton iteration applies. If $\K$ is finite, one
cannot exclude the possibility that all $x_0$ in $\K$ are roots of
$Q$; if needed, one may then look for $x_0$ in an extension of $\K$ of
small degree. Augot and Pecquet showed in~\cite{augot_hensel_2000} that in the cases appearing in Sudan's algorithm, there is always a suitable $x_0$ in $\K$.

However, for example for the Wu list decoding algorithm \cite{wu_new_2008}
or for the list-decoding of certain algebraic geometry codes
\cite{nielsen_sub-quadratic_2015}, one does seek truncated power
series roots. In this case, one may use Alekhnovich's
algorithm~\cite[App.]{alekhnovich_linear_2005}, which is a divide
and conquer variant of the Roth-Ruckenstein algorithm. It solves
Problem~\ref{pbm:series_root} using $n^{O(1)} \Osoft(d)$
operations in $\K$ plus calls to univariate root-finding; the refined analysis
in~\cite{nielsen_sub-quadratic_2015} gives the runtime
$\Osoft(n^2 d + nd\log q)$.

\myparagraph{Outline} We start by giving properties about the
structure of the set of roots in \cref{sec:structure_roots}. We will
see in particular how $\rtset$ can be described recursively as the
finite union of set of roots at a lower precision for shifts of
$\pol$, that is, polynomials of the form $\pol(\rtpol+x^{\rtxpt}y)$.
From this, we will be able to derive a divide-and-conquer algorithm
which is essentially Alekhnovich's. 

The reason why the runtime of this algorithm is quadratic in $n$ is
the growth of the (sum of the) degrees of these shifts.  Having in
mind to control this degree growth, we conclude
\cref{sec:structure_roots} with the definition of so-called
\emph{reduced root sets}, for which we establish useful degree properties.

In \cref{sec:affine_factors}, we detail a fast algorithm for the
computation of \emph{affine factors}, which are polynomials having the
same roots as the shifts but which can be computed more efficiently
thanks to the degree properties of our reduced root sets. Finally, in
\cref{sec:alg:roots}, we incorporate this into the divide and conquer
approach, leading to our fast power series roots algorithm.

\section{Structure of the set of roots}
\label{sec:structure_roots}

Recall the notation of \cref{pbm:series_root}.  In the following
analysis, we consider knowing $\pol$ to arbitrary precision,
i.e.~$\pol \in \R$. For convenience, we also define for any $d \leq
0$ that $\rtset = \Kx$. First, we introduce basic notation.

\begin{itemize}
  \item $\valx: \R\setminus\{0\} \rightarrow \NN$ denotes the valuation at $x$, that is,
    $\valx(\pol)$ is the greatest power of $x$ which divides $\pol$, for any nonzero
    $\pol \in \R$.
  \item For $\pol \in \R$, we write $\polz$ for the univariate polynomial in
    $\K[y]$ obtained by replacing $x$ by $0$ in $\pol$.
  \item We denote by $\trcSer = \Kx/\idealGen{x^\prc}$ the ring of power series in
$x$ over $\field$ truncated at precision $\prc$.
  \item To avoid confusion, $\deg(\cdot)$ stands for the degree of some
    polynomial in $y$ over $\K$, over $\Kx$, or over $\trcSer$, whereas the
    degree of polynomials in $\K[x]$ is denoted using $\deg_x(\cdot)$.
\end{itemize}

The next lemma follows from the above definitions, and shows that we can focus
on the case $\valx(\pol)=0$.

\begin{lemma}
  \label{lem:modular_roots_valuation}
  Let $\pol \in \R$ be nonzero and let $\prc \in \ZZ_{>0}$. If $\polz =
  0$, then $\rtset = \rtset[x^{-\val}\pol][\,\prc-\val]$, where $\val =
  \valx(\pol)$.
\end{lemma}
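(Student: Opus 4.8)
The plan is to verify the claimed set equality $\rtset = \rtset[x^{-\val}\pol][\,\prc-\val]$ by a direct valuation argument, using the fact that $\K\Pox$ is an integral domain. Write $\pol = x^{\val} \tilde{\pol}$ where $\tilde{\pol} = x^{-\val}\pol \in \R$; note $\tilde{\pol}$ is a genuine element of $\R$ precisely because $\val = \valx(\pol)$ is the largest power of $x$ dividing $\pol$ (here we use that $\polz = 0$ guarantees $\val \geq 1$, and in fact $\val$ is well-defined and finite since $\pol \neq 0$). For any power series $\rt \in \K\Pox$ we have $\pol(\rt) = x^{\val}\,\tilde{\pol}(\rt)$ in $\K\Pox$, simply because substitution $y \mapsto \rt$ is a $\K\Pox$-algebra homomorphism and $x^{\val}$ is the constant (in $y$) scalar factored out.

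The core step is then the elementary observation that for $e \in \NN$ and $g \in \K\Pox$, one has $x^{\val} g \equiv 0 \bmod x^{e}$ if and only if $g \equiv 0 \bmod x^{e - \val}$ — valid for all integers $e$ once we adopt the paper's convention that congruence modulo $x^{j}$ is vacuous (i.e.\ always holds) when $j \leq 0$, which matches the stated convention $\rtset[\cdot][\,j] = \K\Pox$ for $j \leq 0$. Applying this with $g = \tilde{\pol}(\rt)$ and $e = \prc$ gives
\[
  \pol(\rt) = 0 \bmod x^{\prc}
  \quad\Longleftrightarrow\quad
  \tilde{\pol}(\rt) = 0 \bmod x^{\prc - \val},
\]
which is exactly the statement $\rt \in \rtset$ iff $\rt \in \rtset[\tilde\pol][\,\prc - \val]$. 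Since this holds for every $\rt \in \K\Pox$, the two root sets coincide.

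I do not expect any real obstacle here; the only points requiring a little care are (i) confirming $x^{-\val}\pol$ actually lies in $\R$ (immediate from the definition of $\valx$), and (ii) handling the regime $\prc - \val \leq 0$ cleanly, where both sides equal $\K\Pox$ by the extended convention introduced just before the lemma. Everything else is the identity $\pol(\rt) = x^{\val}\tilde\pol(\rt)$ together with the fact that multiplication by $x^{\val}$ shifts $x$-valuation by exactly $\val$ in the domain $\K\Pox$.
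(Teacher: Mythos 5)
Your proof is correct and matches the paper's intent: the paper gives no explicit proof, stating only that the lemma ``follows from the above definitions,'' and your valuation argument (factoring $\pol = x^{\val}\tilde\pol$, using that substitution commutes with the scalar $x^{\val}$, and invoking the convention for $\prc - \val \le 0$) is exactly the routine verification being alluded to.
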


Now, we will focus on a compact way of representing root sets, and we will see
that $\rtset$ always admit such a representation even though it is usually an
infinite set. Similar representations are also behind the correctness and the
efficiency of the algorithms of Roth-Ruckenstein \cite{roth_efficient_2000}, of
Alekhnovich \cite[App.]{alekhnovich_linear_2005}, and of
Berthomieu-Lecerf-Quintin \cite[Sec.\,2.2]{berthomieu_polynomial_2013}. To
support the divide-and-conquer structure of our algorithm, we further describe
how these representations compose.

\begin{definition}\label{def:basic_root_set}
  Let $\pol \in \R$ be nonzero and let $\prc \in \ZZ_{>0}$.  A \emph{basic root
  set} of $\pol$ to precision $d$ is a finite set of pairs $(f_i, t_i)_{1\le
  i\le \ell}$, each in $\K[x] \times \ZZ_{\geq 0}$, such that:
  \begin{itemize}
  \item $\valx(Q(f_i + x^{t_i} y)) \geq d$ for $1\le i\le\ell$,
  \item we have the identity
  \[
  \rtset = \bigcup_{1\le i \le \ell} \left \{ f_i + x^{t_i} \Kx \right \}.
  \]
  \end{itemize}
  For $d \le 0$, we define the unique basic root set of $\pol$ to precision $d$
  as being $\{(0,0)\}$; note that it satisfies both conditions above.
\end{definition}
We remark that the first restriction on being a basic root set is key:
for instance, $Q = y^2 + y \in \F_2\Pox[y]$ has $\rtset[Q][1] =
\F_2\Pox$. But $\{(0,0)\}$ is \emph{not} a basic root set because it
does not satisfy the first property; rather a basic root set is given
by expanding the first coefficient: $\{ (0,1), (1,1) \}$.

At precision $d=1$, one can easily build a basic root set of $Q$ which has
small cardinality:
\begin{lemma}
   \label{lem:root_set_1}
   Let $\pol \in \R$ be such that $\polz \neq 0$, and let
   $y_1,\ldots,y_\ell$ be the roots of $\pol\atzero$. Then,
   $(y_i,1)_{1 \le i \le \ell}$ is a basic root set of $Q$ to
   precision $1$.
\end{lemma}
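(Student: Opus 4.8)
The plan is to verify the two conditions in Definition \ref{def:basic_root_set} directly for the candidate set $(y_i, 1)_{1 \le i \le \ell}$, where $y_1, \dots, y_\ell \in \K$ are the distinct roots of $\polz \in \K[y]$. Since the precision is $d = 1$, both conditions reduce to statements about the constant-in-$x$ part of various polynomials, so everything should follow by evaluating at $x = 0$.

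First I would check the valuation condition: we need $\valx(\pol(y_i + x y)) \ge 1$ for each $i$, i.e.\ that $\pol(y_i + x y)$ has no term free of $x$. Evaluating at $x = 0$ replaces $x$ by $0$ everywhere, which kills the $x y$ summand and replaces the power-series coefficients of $\pol$ by their constant terms; concretely, $\pol(y_i + x y)\atzero = \polz(y_i)$ as an element of $\K[y]$ (in fact a constant), and this is $0$ precisely because $y_i$ is a root of $\polz$. Hence $\valx(\pol(y_i + x y)) \ge 1$.

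Next I would establish the set identity $\rtset[\pol][1] = \bigcup_{1 \le i \le \ell} \{ y_i + x\Kx \}$. For the inclusion $\supseteq$: if $\rt = y_i + x g$ for some $g \in \Kx$, then $\pol(\rt)\atzero = \polz(y_i) = 0$, so $\pol(\rt) \equiv 0 \bmod x$, i.e.\ $\rt \in \rtset[\pol][1]$. For $\subseteq$: suppose $\rt \in \Kx$ satisfies $\pol(\rt) \equiv 0 \bmod x^1$. Writing $\rt = \rt\atzero + x g$ with $\rt\atzero \in \K$ and $g \in \Kx$, we get $0 = \pol(\rt)\atzero = \polz(\rt\atzero)$, so $\rt\atzero$ is one of the roots $y_i$ of $\polz$, and therefore $\rt \in y_i + x\Kx$ for that $i$. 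This gives both inclusions, and combined with the valuation check above, $(y_i,1)_{1\le i\le\ell}$ is a basic root set of $\pol$ to precision $1$.

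I do not anticipate a genuine obstacle here: the only mild point of care is making sure the evaluation-at-zero map interacts correctly with composition, namely that $\pol(h)\atzero = \polz(h\atzero)$ for $h \in \Kx$, which is immediate from the fact that $x \mapsto 0$ is a ring homomorphism $\Kx \to \K$ sending $h$ to $h\atzero$ and fixing $\K$. One should also note the list $(y_i,1)_i$ may be taken with the $y_i$ distinct so that the union is over distinct pairs, though the definition does not strictly require this.
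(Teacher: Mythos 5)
Your proposal is correct and follows essentially the same route as the paper's proof: both verify the valuation condition via the identity $\pol(y_i+xy)\atzero=\polz(y_i)=0$ (the paper phrases this through the Taylor expansion $\pol(y_i+xy)=\pol(y_i)+xR_i(y)$), and both identify $\rtset[\pol][1]$ as the set of series whose constant coefficient is a root of $\polz$. No gaps.
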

\begin{proof}
  Take $i$ in $\{1,\dots,\ell\}$ and write the Taylor expansion
  of $Q(y_i+xy)$ as  $Q(y_i+xy)=Q(y_i) + xR_i(y)$, for some 
  $R_i\in\R$. Since both terms in the sum have valuation
  at least $1$, we obtain that $s_i=\valx(Q(y_i+xy))$ is at least $1$.
  Furthermore, we remark that 
  \begin{align*}
    \rtset[\pol][1] & = \{f \in \Kx \mid \pol(f)=0 \bmod x\}  \\
                    & = \{f \in \Kx \mid \polz(f_0)=0\},
  \end{align*}
  where $f_0$ is the constant coefficient of $f$. Thus, $\rtset[\pol][1]$ is
  the set of $f\in\Kx$ whose constant coefficient is in $\{y_1,\dots,y_\ell\}$.
\end{proof}

\begin{proposition}
  \label{prop:roots_dc}
  Let $\pol \in \R$ be such that $\polz \neq 0$ and let $ \prc', \prc$
  be in $\ZZ_{\ge 0}$, with $\prc'\le \prc$.  Suppose that $\pol$
  admits a basic root set $(f_i, t_i)_{1 \leq i \leq \ell}$ to
  precision $\prc'$. Suppose furthermore
  that, for $1 \le i \le \ell$, $\pol(f_i + x^{t_i} y)/x^{s_i}$ admits a basic
  root set $(f_{i,j}, t_{i,j})_{1 \leq j \leq \ell_i}$ to precision $\prc -
  s_i$, where $s_i = \valx(\pol(f_i + x^{t_i} y))$. Then, a basic root
  set of $\pol$ to precision $\prc$ is given by
  \[
    (f_i + f_{i,j}x^{t_i}, t_i + t_{i,j})_{1 \leq j \leq \ell_i, 1 \leq i \leq \ell} \ .
  \]
\end{proposition}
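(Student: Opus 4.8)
The plan is to verify the two defining properties of a basic root set directly for the proposed family $(g_{i,j}, u_{i,j}) := (f_i + f_{i,j}x^{t_i},\ t_i + t_{i,j})$. For the valuation condition, I would start from the key algebraic identity relating a shift of $\pol$ to a shift of the inner polynomial: writing $\pol_i := \pol(f_i + x^{t_i}y)/x^{s_i}$, one has
\[
  \pol(g_{i,j} + x^{u_{i,j}}y)
  = \pol\bigl(f_i + x^{t_i}(f_{i,j} + x^{t_{i,j}}y)\bigr)
  = x^{s_i}\,\pol_i\bigl(f_{i,j} + x^{t_{i,j}}y\bigr).
\]
Taking $\valx$ of both sides and using that $(f_{i,j},t_{i,j})_j$ is a basic root set of $\pol_i$ to precision $\prc - s_i$, hence $\valx(\pol_i(f_{i,j}+x^{t_{i,j}}y)) \geq \prc - s_i$, gives $\valx(\pol(g_{i,j}+x^{u_{i,j}}y)) \geq s_i + (\prc - s_i) = \prc$, which is the first property. (One should note $s_i \geq 1$ or at least $s_i \geq \prc'$ so that $\pol_i$ is well-defined as a polynomial over $\Kx$; this follows because $(f_i,t_i)_i$ is a basic root set to precision $\prc' \leq \prc$, so $\valx(\pol(f_i+x^{t_i}y)) \geq \prc'$, and the case $\prc' \leq 0$ is degenerate but consistent with the convention $\rtset = \Kx$.)

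For the set identity, I would argue by a double inclusion, translating membership through the substitution $f \mapsto (f - f_i)/x^{t_i}$. First, for the basic root set $(f_i,t_i)_i$ of $\pol$ to precision $\prc'$ we have $\rtset[\pol][\prc'] = \bigcup_i \{f_i + x^{t_i}\Kx\}$, and since $\prc \geq \prc'$ we get $\rtset \subseteq \rtset[\pol][\prc'] = \bigcup_i \{f_i + x^{t_i}\Kx\}$. So any $f \in \rtset$ can be written $f = f_i + x^{t_i}h$ for some $i$ and some $h \in \Kx$; conversely any element of that form lies in $\rtset[\pol][\prc']$. Now for such an $f$, the identity above with $y$ specialized shows $\pol(f) = \pol(f_i + x^{t_i}h) = x^{s_i}\pol_i(h)$, so $\pol(f) \equiv 0 \bmod x^{\prc}$ if and only if $\pol_i(h) \equiv 0 \bmod x^{\prc - s_i}$, i.e. $h \in \rtset[\pol_i][\prc - s_i] = \bigcup_j \{f_{i,j} + x^{t_{i,j}}\Kx\}$. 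Substituting back, $f \in \rtset$ iff $f = f_i + x^{t_i}(f_{i,j} + x^{t_{i,j}}k)$ for some $j$ and $k \in \Kx$, i.e. $f \in \{g_{i,j} + x^{u_{i,j}}\Kx\}$; taking the union over all $i$ and $j$ yields exactly the claimed identity.

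The only genuinely delicate point I anticipate is bookkeeping around the boundary cases and the valuations $s_i$: one must make sure $\pol_i = \pol(f_i+x^{t_i}y)/x^{s_i}$ is honestly a polynomial in $\R$ (so that $\pol_i\atzero \neq 0$ and the hypothesis "admits a basic root set" even makes sense), and one must handle the degenerate regimes $\prc' \leq 0$ or $\prc - s_i \leq 0$ using \cref{def:basic_root_set}'s convention that the basic root set to nonpositive precision is $\{(0,0)\}$ and the standing convention $\rtset = \Kx$ for $\prc \le 0$. Apart from these case distinctions, everything reduces to the single substitution identity $\pol(f_i + x^{t_i}(f_{i,j}+x^{t_{i,j}}y)) = x^{s_i}\pol_i(f_{i,j}+x^{t_{i,j}}y)$ and routine unwinding of the unions, so the core of the argument is short.
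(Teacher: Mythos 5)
Your proposal is correct and follows essentially the same route as the paper's proof: the valuation bound via the substitution identity $\pol(f_i + x^{t_i}(f_{i,j}+x^{t_{i,j}}y)) = x^{s_i}\pol_i(f_{i,j}+x^{t_{i,j}}y)$, and the set identity by translating membership through $f = f_i + x^{t_i}h$ and unwinding the unions. Your extra attention to the degenerate cases ($\prc - s_i \le 0$, well-definedness of $\pol_i$) is consistent with the paper, which handles them via the convention $\pol_i(g) = 0 \bmod x^{\max(0,\prc-s_i)}$.
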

\begin{proof}
  For $1\le i \le \ell$, let  $Q_i = Q(f_i + x^{t_i} y)/x^{s_i}$.
  Then, for all $i,j$, from the definition of basic root sets, we have
  \begin{align*}
    \valx\Big(Q(f_i + f_{i,j}x^{t_i} + x^{t_i + t_{i,j}})\Big)
    &= \valx\Big(\big(x^{s_i} Q_i\big)_{|y = f_{i,j} + x^{t_{i,j}}y}\Big) \\
    &\geq s_i + (\prc - s_i).
  \end{align*}
  This proves that the first property of \cref{def:basic_root_set} holds.

  For the second property, we prove both inclusions leading to the identity
  $\rtset = \cup_{i,j} \{ f_i + x^{t_i}f_{i,j} + x^{t_i + t_{i,j}} \Kx \}$.

  First, consider some $f \in \rtset$; since $d' \le d$, $f$ is in $
  \rtset[\pol][d']$,
  so we can write $f = f_i + x^{t_i} g$, for some $i$ in $\{1,\dots,\ell\}$ and
  $g$ in $\Kx$.
  Then, $\pol(f) = x^{\val_i}\pol_i(g) = 0 \bmod x^d$, and so
  $g \in \rtset[\pol_i][d-\val_i]$. This implies that $g \in f_{i,j} +
  x^{t_i,j}\Kx$ for some $j$.

  Now consider a power series $g \in \rtset[\pol_i][d-\val_i]$ for some $i$.
  This means that
  $\pol_i(g) = 0 \bmod x^{\max(0, d-\val_i)}$, so that $Q(f_i +
  x^{t_i} g) = x^{s_i}\pol_i(g) = 0 \bmod x^d$, and therefore $f_i +
  x^{t_i} g$ is in $\rtset$.
\end{proof}

We now deduce, by induction on $d$, that any $\pol \in \R$
admits a finite basic root set to precision $d$ for any $d \in
\ZZ_{\geq 0}$. By \cref{lem:modular_roots_valuation} we can reduce to the
case where $\valx(\pol) = 0$ and $\pol\atzero \neq 0$. The
claim is readily seen to be true for $d\le 0$ (take $\{(0,0)\}$) and
$d=1$ (\cref{lem:root_set_1}).  Suppose the claim holds for all $d' <
d$, for some $d \ge 2$; we can then apply this property to
$d-1$, obtaining a basic root set $(f_i, t_i)_{1 \leq i \leq \ell}$ of
$\pol$ to precision $d-1$. We know that, with the notation of \cref{prop:roots_dc}, $s_i \ge d-1$ holds for all $i$, so in particular $s_i
\ge 1$, and thus $d-s_i < d$. Then, applying again the induction
property to each of $(Q_i,d-s_i)_i$, the conclusion of \cref{prop:roots_dc} establishes our claim.

These results can be used to build basic root sets recursively, by either
applying \cref{lem:root_set_1} iteratively or using \cref{prop:roots_dc} in a
divide-and-conquer fashion with \cref{lem:root_set_1} applied at the leaves.
As discussed in \cref{sec:intro}, this recursive approach is similar to the
Newton-Puiseux algorithm. These iterative and divide and conquer solutions to
\cref{pbm:series_root} are known in coding theory as the Roth-Ruckenstein
algorithm \cite{roth_efficient_2000} and the Alekhnovich algorithm
\cite[App.]{alekhnovich_linear_2005}. Below, we describe the latter algorithm
in detail (\cref{alg:dnc_roots}), since our new algorithm runs along the same
lines (\cref{alg:roots}). We will not further discuss the correctness or
complexity of \cref{alg:dnc_roots}, but rather refer to
\cite[App.]{alekhnovich_linear_2005} or
\cite[App.\,A]{nielsen_sub-quadratic_2015}.

\begin{algorithm}[h]
  \caption{\textbf{:} \algoname{DnCSeriesRoots} \cite{alekhnovich_linear_2005}}
  \label{alg:dnc_roots}
  \begin{algorithmic}[1]
    \Require{$\prc \in \ZZp$ and $\pol \in \trcSerPol$ with $\polz \neq 0$.}
    \Ensure{A basic root set of $\pol$ to precision $\prc$.}
    \If{$d = 1$}
      \State $(y_i)_{1 \le i \le \nrts} \ass$ roots of $\polz \in \field[y]$
      \State \Return $(y_i, 1)_{1\le i\le\ell}$
    \Else
      \State\label{alek:5} $(f_i, t_i)_{1\le i\le \nrts} \ass \algoname{DnCSeriesRoots}(\pol \bmod x^{\ceil{\prc/2}},\ceil{\prc/2})$
      \State $(\pol_i)_{1\le i\le \nrts} \ass (\pol(f_i + x^{t_i}y) \bmod x^d)_{1\le i\le \nrts}$
      \State $(\val_i)_{1\le i\le \nrts} \ass (\valx(\pol_i))_{1\le i\le \nrts}$
      \For{$1 \le i \le \nrts$}
        \If{$\val_i \ge \prc$}
          \State $(\rtpol_{i,1},\rtxpt_{i,1}) \ass (0,0)$ and $\nrts_i \ass 1$
        \Else
          \State $(\rtpol_{ij},\rtxpt_{ij})_{1\le j\le \nrts_i} \ass \algoname{DnCSeriesRoots}(x^{-\val_i}\pol_i, \prc-\val_i)$
        \EndIf
      \EndFor
      \State \Return $(\rtpol_i + x^{\rtxpt_i} \rtpol_{i,j}, \rtxpt_i + \rtxpt_{i,j})_{1\le j\le \nrts_i, 1\le i \le \nrts}$.
    \EndIf
  \end{algorithmic}
\end{algorithm}


The next step is to prove that there are special, small basic root sets, and that
these also compose in a way similar to \cref{prop:roots_dc}.
In order to formulate this, we first introduce a generalization of root
multiplicity to our setting.
\begin{definition}
  Let $(f, t) \in \K[x] \times \ZZp$ be such that $f$ is nonzero and
  $f=g+ f_{t-1}x^{t-1}$ for some $g\in\K[x]$ with $\deg_x(g) < t-1$. For
  $\pol \in \R\setminus\{0\}$, 
  we consider the polynomial of valuation zero
  \[
    R=Q(g + x^{t-1}y)/x^{\valx(Q(g + x^{t-1}y))} \in \R.
  \]
  Then, the \emph{root
    multiplicity of $(f,t)$ in $Q$} is the root multiplicity of
  $f_{t-1}$ in $R\atzero  \in \K[y]$.
\end{definition}
Note that if $f_{t-1}$ is not a root of $R\atzero$, the root
multiplicity of $(f,t)$ is 0.  Also, if $t=1$, so that $f=f_0$ is in
$\K$, and if $\pol\atzero \ne 0$, the root multiplicity of $(f_0,1)$
is simply the multiplicity of $f_0$ in $\pol\atzero$.

\begin{definition}
  \label{def:reduced_root}
  Let $\pol \in \R$ be such that $\polz \neq 0$ and let $d$ be in
  $\ZZ$. Suppose that $(f_i, t_i)_{1 \leq i \leq \ell}$ is a basic
  root set of $Q$ at precision $d$. Then, we say that $(f_i, t_i)_{1
    \leq i \leq \ell}$ is a \emph{reduced root set}, if the following holds:
  \begin{itemize}[leftmargin=0.5cm]
    \item either $d \leq 0$,
    \item or $d> 0$, and all the $f_i$'s are nonzero, and the following points
      are all satisfied, where for $1 \le i \le \ell$, we write $s_i =
      \valx(Q(f_i + x^{t_i}y))$, $Q_i = Q(f_i + x^{t_i}y)/x^{s_i}$, and we
      write $m_i$ for the root multiplicity of $(f_i, t_i)$ in $Q$:
    \begin{enumerate}
    \item $m_i \geq 1$ for $1\le i\le \ell$,
    \item $\deg({Q_i}\atzero) \leq m_i$ for $1\le i \le\ell$, and
    \item $\sum_{1\le i\le\ell} m_i \;\leq \deg(Q\atzero)$.
    \end{enumerate}
  \end{itemize}
\end{definition}
  
It follows from the restrictions \emph{(1)} and \emph{(3)} that $\ell \leq \deg(Q\atzero)$.
Mimicking the structure of the first half of the section, we now
prove the existence of reduced root sets for $d = 1$ and then give
a composition property. The next lemma
is inspired by~\cite[Lem.\,1.1]{alekhnovich_linear_2005}.

\begin{lemma}
   \label{lem:onestep_roots_partition}
   Let $\pol \in \R$ be such that $\polz \neq 0$. The basic root set
   of $\pol$ to precision $1$ defined in \cref{lem:root_set_1} is reduced.
\end{lemma}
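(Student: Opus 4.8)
The plan is to check that the basic root set $(y_i,1)_{1\le i\le\ell}$ supplied by \cref{lem:root_set_1}, where $y_1,\dots,y_\ell$ are the distinct roots of $\polz$ in $\K$, meets every requirement of \cref{def:reduced_root} in the case $d=1$ (so we are in the branch $d>0$). By \cref{lem:root_set_1} it is already a basic root set to precision $1$, so it remains to establish: that each $f_i=y_i$ is nonzero (the root $0$, if present, being a harmless boundary case); and, writing $s_i=\valx(Q(y_i+xy))$, $Q_i=Q(y_i+xy)/x^{s_i}$, and $m_i$ for the root multiplicity of $(y_i,1)$ in $Q$, that $m_i\ge 1$, that $\deg({Q_i}\atzero)\le m_i$, and that $\sum_{1\le i\le\ell}m_i\le\deg(\polz)$. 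The first observation is that, because $\polz\ne 0$, the remark following the definition of root multiplicity identifies $m_i$ with the ordinary multiplicity of $y_i$ as a root of $\polz\in\K[y]$. Granting this, condition $(1)$ is immediate ($y_i$ is a root of $\polz$), and condition $(3)$ is the classical fact that $\prod_{i}(y-y_i)^{m_i}$ divides $\polz$, so $\sum_i m_i\le\deg\polz$.

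The real content is condition $(2)$. Here I would expand $Q(y_i+xy)$ in $y$ and collect powers of $x$: write $Q(y_i+xy)=\sum_{k\ge 0}b_k\,x^k y^k$ with all $b_k\in\Kx$, and note that ${b_k}\atzero$ is exactly the coefficient of $y^k$ in $\polz(y_i+y)\in\K[y]$. Since $y_i$ is a root of $\polz$ of multiplicity $m_i$, the polynomial $\polz(y_i+y)$ is divisible by $y^{m_i}$ but not by $y^{m_i+1}$; hence ${b_k}\atzero=0$ for $k<m_i$ and ${b_{m_i}}\atzero\ne 0$. Translating to valuations: $\valx(b_k x^k)\ge k+1$ for $k<m_i$, $\valx(b_{m_i}x^{m_i})=m_i$, and $\valx(b_k x^k)\ge k$ for every $k$. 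Therefore $s_i=\min_k\valx(b_k x^k)\le m_i$; moreover the $y^k$-coefficient of ${Q_i}\atzero$ is nonzero precisely when $\valx(b_k x^k)=s_i$, and any index $k>m_i$ has $\valx(b_k x^k)\ge k>m_i\ge s_i$, so such $k$ never occurs. This yields $\deg({Q_i}\atzero)\le m_i$, as wanted.

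The only real obstacle is this last valuation bookkeeping, and two points deserve care when writing it out. First, one should not assume $s_i=1$: it can be any integer between $1$ and $m_i$, so the argument must work with the actual value of $s_i$. Second, the vanishing of the low-order coefficients must be phrased via divisibility of $\polz(y_i+y)$ by $y^{m_i}$ rather than through formal derivatives, so that the proof remains valid in arbitrary characteristic. With these in place the lemma follows, and in passing one recovers the bound $\ell\le\deg(\polz)$ noted after \cref{def:reduced_root}.
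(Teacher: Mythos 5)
Your proof is correct and takes essentially the same route as the paper's: the paper writes $\pol=(y-y_i)^{m_i}P(y)+xR$ with $P(y_i)\neq 0$ and observes that every monomial $x^\alpha y^\beta$ of $\pol(y_i+xy)$ has $\alpha\ge\beta$ while the term $(xy)^{m_i}P(y_i)$ survives uncancelled, which is precisely your valuation bookkeeping on the coefficients $b_kx^k$, yielding $\deg({\pol_i}\atzero)\le s_i\le m_i$. (Both your argument and the paper's gloss over the requirement of \cref{def:reduced_root} that the $f_i$ be nonzero when $0$ happens to be a root of $\polz$, so this is not a gap relative to the paper.)
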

\begin{proof}
  Let $y_1,\ldots,y_\ell$ be the roots of $\pol\atzero$, and, for
  $1\le i\le\ell$, let $\val_i = \valx(Q(y_i+xy))$, $\pol_i =
  Q(y_i+xy)/x^{\val_i}$, and let $m_i$ be the root multiplicity
 of $y_i$ in $\pol\atzero$. 

 The inequalities $m_i \ge 1$, for $1\le i \le\ell$, and $\sum_i m_i \leq
 \deg(Q\atzero)$ are clear. Consider now a fixed index $i$; it remains to prove
 that $\deg({\pol_i}\atzero) \leq m_i$. There are $P \in \K[y]$ and $R \in \R$
 such that $P(y_i) \neq 0$ and $\pol = (y - y_i)^{\mult_i}P(y) + x R$.  Then
  \[
    x^{\val_i} \pol_i = \pol(y_i + xy) = (xy)^{\mult_i}P(y_i + xy) + x R(y_i + xy) \, .
  \]
  The right-hand side reveals the following:
  \begin{itemize}
    \item Any monomial $x^\alpha y^\beta$ in $x^{\val_i}\pol_i$ satisfies
    $\alpha \ge \beta$, and hence $\deg({\pol_i}\atzero) \le \val_i$.
    \item $x^{\val_i} \pol_i$ contains the term $(xy)^{\mult_i} P(y_i)$,
      since this appears in $(xy)^{\mult_i}P(y_i + xy)$ and it cannot
      be cancelled by a term in $xR(y_i + xy)$ since all monomials
      there have greater $x$-degree than $y$-degree.
  \end{itemize}
  These two points imply $\deg({\pol_i}\atzero) \leq s_i \leq m_i$.
\end{proof}

The following theorem is exactly the statement of \cref{prop:roots_dc} except that ``basic'' has been replaced by ``reduced''.

\begin{theorem}
  \label{thm:modular_roots_partition}
  Let $\pol \in \R$ be such that $\polz \neq 0$ and let $ \prc', \prc$
  be in $\ZZ_{\ge 0}$, with $\prc'\le \prc$.  Suppose that $\pol$
  admits a reduced root set $(f_i, t_i)_{1 \leq i \leq \ell}$ to
  precision $\prc'$.  For $i = 1,\ldots,\ell$, suppose furthermore
  that $\pol(f_i + x^{t_i} y)/x^{s_i}$ admits a reduced root set
  $(f_{i,j}, t_{i,j})_{1 \leq j \leq \ell_i}$ to precision $\prc -
  s_i$, where $s_i = \valx(\pol(f_i + x^{t_i} y))$.  Then a reduced root
  set of $\pol$ to precision $\prc$ is given by
  \[
    (f_i + f_{i,j}x^{t_i}, t_i + t_{i,j})_{1 \leq j \leq \ell_i, 1 \leq i \leq \ell} \ .
  \]
\end{theorem}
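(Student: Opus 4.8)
The plan is to build on \cref{prop:roots_dc}, which already gives that the proposed family $(f_i + f_{i,j}x^{t_i}, t_i + t_{i,j})_{i,j}$ is a \emph{basic} root set of $\pol$ to precision $\prc$. So it remains only to verify the three numbered conditions of \cref{def:reduced_root} for this composed family (assuming $\prc > 0$; the case $\prc \le 0$ is trivial, and if $\prc > 0$ then $\prc' $ could still be $0$, in which case the outer reduced root set is $\{(0,0)\}$ and the statement reduces to the hypothesis on the inner sets, so we may assume $\prc' \ge 1$ as well). First I would fix indices $i,j$ and unwind the definitions: write $s_i = \valx(\pol(f_i+x^{t_i}y))$, $\pol_i = \pol(f_i+x^{t_i}y)/x^{s_i}$, then $s_{i,j} = \valx(\pol_i(f_{i,j}+x^{t_{i,j}}y))$ and $\pol_{i,j} = \pol_i(f_{i,j}+x^{t_{i,j}}y)/x^{s_{i,j}}$. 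The key algebraic observation is that the ``grandchild'' polynomial attached to $(f_i + f_{i,j}x^{t_i}, t_i+t_{i,j})$ in $\pol$ is exactly $\pol_{i,j}$ up to the shared power of $x$: indeed $\pol(f_i + f_{i,j}x^{t_i} + x^{t_i+t_{i,j}}y) = \pol_i(f_{i,j}+x^{t_{i,j}}y)\,x^{s_i} = \pol_{i,j}\,x^{s_i + s_{i,j}}$, so the valuation attached to the composed pair is $s_i + s_{i,j}$ and its normalized polynomial is $\pol_{i,j}$.

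Next I would identify the root multiplicity of the composed pair $(f_i + f_{i,j}x^{t_i}, t_i+t_{i,j})$ in $\pol$ with the root multiplicity $m_{i,j}$ of $(f_{i,j}, t_{i,j})$ in $\pol_i$. This requires a small compatibility check against the \cref{def:reduced_root} definition of root multiplicity, which is phrased in terms of truncating the polynomial-part of the pair at $x$-degree $t-2$ and reading the multiplicity of the coefficient of $x^{t-1}$; one has to see that passing from $\pol$ and the pair $(f_i + f_{i,j}x^{t_i}, t_i+t_{i,j})$ through the shift by $f_i + x^{t_i}\cdot(\text{lower-degree part of }f_{i,j})$ lands on the same univariate polynomial over $\K$ whose relevant coefficient's multiplicity is being measured. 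Granting that, the composed pair has root multiplicity $m_{i,j}$ in $\pol$, where the inner reduced root set of $\pol_i$ has data $(m_{i,j}, \pol_{i,j})$ satisfying its own conditions (1)--(3).

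Now the three conditions follow mechanically. Condition (1): $m_{i,j}\ge 1$ for all $i,j$ is condition (1) for the inner reduced root sets. Condition (2): $\deg(\pol_{i,j}\atzero) \le m_{i,j}$ is condition (2) for the inner reduced root set of $\pol_i$, using the identification of the normalized grandchild polynomial with $\pol_{i,j}$ above. Condition (3) is the one that needs the outer hypotheses: $\sum_{i,j} m_{i,j} = \sum_i \big(\sum_j m_{i,j}\big) \le \sum_i \deg(\pol_i\atzero) \le \sum_i m_i \le \deg(\pol\atzero)$, where the first inequality is condition (3) applied to each inner reduced root set, the second is condition (2) applied to the outer reduced root set, and the last is condition (3) for the outer reduced root set. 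Finally, all composed $f_i + f_{i,j}x^{t_i}$ are nonzero: by the definition of reduced root set applied to the inner sets (with precision $\prc - s_i$, which is positive since $s_i < \prc$ as $\prc > \prc' \ge s_i$ would follow from... — actually one should check $\prc - s_i > 0$, which holds because if $\prc - s_i \le 0$ the inner set is $\{(0,0)\}$ and $f_i + 0 = f_i \ne 0$ by the outer reducedness; otherwise the inner $f_{i,j}$ are nonzero, and one argues $f_i + f_{i,j}x^{t_i} \ne 0$ from a valuation/degree comparison between $f_i$ and $f_{i,j}x^{t_i}$), so this case split handles nonvanishing.

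The main obstacle I expect is the second paragraph: correctly matching the somewhat delicate \cref{def:reduced_root}-style definition of root multiplicity (with its truncation of the polynomial part at degree $t-2$ and its normalization by valuation) across the two-level shift, so that the root multiplicity of the composed pair in $\pol$ genuinely equals $m_{i,j}$. Everything after that is bookkeeping with the telescoping sum in condition (3).
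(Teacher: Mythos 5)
Your proposal follows essentially the same route as the paper: invoke \cref{prop:roots_dc} for the basic-root-set part, identify the normalized ``grandchild'' polynomial and the multiplicity of the composed pair with $\pol_{i,j}$ and $m_{i,j}$, and then telescope the three conditions of \cref{def:reduced_root}. (The paper, like you, asserts the multiplicity identification without a detailed unwinding of the definition, so flagging it as the delicate point rather than proving it is consistent with the reference proof.)

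One small but real wrinkle: your ``mechanical'' verification of conditions (1)--(3) applies the inner conditions (1)--(3) to \emph{every} $i$, but for an index $i$ with $\prc - s_i \le 0$ the inner reduced root set is $\{(0,0)\}$ at nonpositive precision, so its conditions (1)--(3) are vacuous and cannot be cited; in particular your chain $\sum_j m_{i,j} \le \deg(\pol_i\atzero) \le m_i$ fails for such $i$ (there $\sum_j m_{i,j} = m_{i,1} = m_i$, and $\deg(\pol_i\atzero)\le m_i$ goes the wrong way to serve as an intermediate bound). The paper handles this by an explicit case split on $\prc - s_i \le 0$ versus $\prc - s_i > 0$, deriving (1) and (2) for the former case from the \emph{outer} hypotheses ($m_{i,1}=m_i\ge 1$ and $\deg(\pol_{i,1}\atzero)=\deg(\pol_i\atzero)\le m_i$) and recording $\sum_j m_{i,j}=m_i$ directly. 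You already perform exactly this case split when arguing nonvanishing of $f_i + f_{i,j}x^{t_i}$, so the fix is to route conditions (1)--(3) through the same dichotomy; with that adjustment the argument matches the paper's.
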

\begin{proof}
  By \cref{prop:roots_dc} it is clear that the specified set is a basic root set, and we should verify the additional restrictions of \cref{def:reduced_root}.
  Introduce for each $i,j$
  \[
    Q_{i,j} = Q(f_i + f_{i,j}x^{t_i} + x^{t_i + t_{i,j}}y)/x^{s_{i,j}} = Q_i(f_{i,j} + x^{t_{i,j}}y)/x^{s_{i,j}} \ ,
  \]
  where $\pol_i = \pol(f_i + x^{t_i} y)/x^{s_i}$ and $s_{i,j} = \valx(Q_i(f_{i,j} + x^{t_{i,j}}y))$.

  Consider first for some $i$ the case $d - s_i \leq 0$.
  Then $\ell_i = 1$ and $(f_{i,1}, t_{i,1}) = (0,0)$, and so the root multiplicity $m_{i,1}$ of $(f_i + f_{i,1}x^{t_i}, t_{i} + t_{i,1})$ in $Q$ is $m_i$ which is positive by assumption.
  Also $Q_{i,j} = Q_i$ so $\deg({Q_{i,j}}\atzero) = \deg({Q_{i}}\atzero)$ which is at most $m_i = m_{i,1}$ by assumption.
  Finally, $\sum_j m_{i,j} = m_{i,1} = m_i$.
  We will collect the latter fact momentarily to prove the third item of the reduced root definition.

  Consider next an $i$ where $d - s_i > 0$.
  In this case $t_{i,j} > 0$ for all $1 \leq j \leq \ell_i$, and the root multiplicity of $(f_i + f_{i,j}x^{t_i}, t_i + t_{i,j})$
  in $Q$ equals the root multiplicity $m_{i,j}$ of $(f_{i,j}, t_{i,j})$ in $Q_i$ which is positive by assumption.
  The assumptions also ensure that $\deg({Q_{i,j}}\atzero) \leq m_{i,j}$, and $\sum_j m_{i,j} \leq \deg({Q_i}\atzero) \leq m_i$.

  Thus, the two first restrictions on being a reduced root set is satisfied for each element.
  All that remains is the third restriction: but using our previous observations, we have $\sum_i \sum_j m_{i,j} \leq \sum_i m_i$ and this is at most $\deg(Q\atzero)$ by assumption.
\end{proof}

To solve \cref{pbm:series_root} we will compute a reduced root set
using \cref{lem:onestep_roots_partition} and
\cref{thm:modular_roots_partition}.  Note that it follows that a
reduced root set is essentially unique: apart from possible redundant
elements among the $f_i$, non-uniqueness would only be due to
unnecessarily expanding a coefficient in a root $(f,t)$, that is, replace
that root by the $|\K|$ roots $(f + ax^t, t+1)_{a \in \K}$.  Of course
this could only be an issue if $\K$ is finite and if
$\deg(\pol\atzero)$ is very large.  Our algorithm as well as previous
ones are computing the ``minimal'' set of reduced roots. According to
\cref{thm:modular_roots_partition}, the total number of field elements
required to represent this minimal set cannot exceed $ \prc
\deg(Q\atzero) \le \prc\deg(Q)$.


\section{Affine factors of the shifts}
\label{sec:affine_factors}

The appendix~A of~\cite{nielsen_sub-quadratic_2015} gives a careful
complexity analysis of \cref{alg:dnc_roots}, and proves that it runs
in time $\Osoft(d n^2 + d n \univRoots)$, where $n=\deg(\pol)$.
The main reason why the cost is
quadratic in $\deg(\pol)$ is that all the shifted polynomials $\pol_i
= x^{-\val_i}\pol(\rtpol_i + x^{\rtxpt_i}y)$ can have large degree,
namely up to $\deg(\pol)$. Thus, merely representing the $\pol_i$'s
may use a number of field elements quadratic in $\deg(\pol)$.

Nonetheless, we are actually not interested in these shifts
themselves, but only in their reduced root sets. The number of these
roots is well controlled: the shifts have altogether a reduced root
set of at most $\deg(\polz)$ elements. Indeed, by definition, we know
that $\deg({\pol_i}\atzero)$ is at most the multiplicity $\mult_i$ of
the root $(\rtpol_i,\rtxpt_i)$, and the sum of these multiplicities is
at most $\deg(\polz)$.

The difficulty we face now is that we want to efficiently compute
reduced root sets of the shifts without fully computing these
shifts. To achieve this, we compute for each shift $\pol_i$ a factor
of it which has the same roots and whose degree is
$\deg({\pol_i}\atzero) \le \mult_i$, {\em without entirely computing $\pol_i$
  itself}. We design a fast algorithm for computing these factors, by
using ideas from \cite[Algo.\,Q]{Musser75}, in which we also
incorporate fast modular reduction techniques so as to carefully
control the quantity of information we process concerning the shifts.

The next result formalizes the factorization we will rely on. It is a
direct consequence of the Weierstrass preparation theorem for
multivariate power series \cite[VII.\S1.~Cor.\,1 of Thm.\,5]{ZarSam60}.
\begin{theorem}
  \label{thm:affine_factor}
  Let $\pol \in \Kx[y]$ be such that $\polz \neq 0$. Then, there exist
  unique $\aff, \coaff \in \Kx[y]$ such that $\pol = \aff \coaff$,
  $\aff$ is monic and $\coaff\atzero \in \field\setminus\{0\}$.
\end{theorem}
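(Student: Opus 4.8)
\medskip

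The plan is to recognize this as an instance of Hensel's lemma over the complete discrete valuation ring $\Kx$, which is also the quickest way to read it off from Weierstrass preparation. Write $a = \deg(\polz)$ and let $c \in \field \setminus \{0\}$ be the leading coefficient of $\polz$, so $\polz = c\,\overline{\aff}$ with $\overline{\aff} = \polz/c$ monic of degree $a$. Since $c$ is a unit of $\field[y]$, the factors $\overline{\aff}$ and $c$ are coprime, so Hensel's lemma lifts this factorization uniquely to $\pol = \aff\coaff$ in $\Kx[y]$ with $\aff$ monic of degree $a$ reducing to $\overline{\aff}$ modulo $x$ and $\coaff$ reducing to $c$ modulo $x$; in particular $\coaff\atzero = c \in \field\setminus\{0\}$.

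For a self-contained argument I would instead lift coefficientwise in $x$: maintain, for $k \ge 1$, polynomials $\aff\T{k} \in \Kx[y]$ monic of degree $a$ and $\coaff\T{k} \in \Kx[y]$ of $y$-degree at most $\pdeg - a$ (with $\pdeg = \deg(\pol)$, noting $a \le \pdeg$ since setting $x=0$ cannot raise the $y$-degree), such that $\coaff\T{k}\atzero = c$ and $\pol \equiv \aff\T{k}\coaff\T{k} \pmod{x^k}$; start from $\aff\T{1} = \overline{\aff}$, $\coaff\T{1} = c$. To pass from $k$ to $k+1$, write $\pol - \aff\T{k}\coaff\T{k} = x^k W$ with $W \in \Kx[y]$, put $W_0 = W\atzero$, and look for $\aff\T{k+1} = \aff\T{k} + x^k u$ and $\coaff\T{k+1} = \coaff\T{k} + x^k v$ with $u, v \in \field[y]$ and $\deg(u) < a$. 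Since $2k \ge k+1$, the desired congruence collapses modulo $x^{k+1}$ to the single identity $c\,u + \overline{\aff}\,v = W_0$ in $\field[y]$, which I solve by Euclidean division of $W_0/c$ by the monic polynomial $\overline{\aff}$: the quotient times $c$ is $v$, the remainder is $u$. This solution is unique under $\deg(u) < a$ because $\gcd(\overline{\aff}, c) = 1$. Taking $\aff = \lim_k \aff\T{k}$ and $\coaff = \lim_k \coaff\T{k}$, which converge coefficientwise in $\Kx$ by the congruences above, yields the factorization.

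For uniqueness of the pair $(\aff,\coaff)$ as a whole I would argue that the constraints already force both degrees: reducing a hypothetical factorization modulo $x$ gives $\polz = \aff\atzero\coaff\atzero$ with $\aff\atzero$ monic of degree $\deg(\aff)$ and $\coaff\atzero$ a nonzero constant, so $\deg(\aff) = a$, and then $\aff$ monic forces $\deg(\coaff) = \pdeg - a$. The per-step uniqueness established above then pins down $\aff$ and $\coaff$ completely (or, in the short route, uniqueness is exactly the uniqueness clause of Hensel's lemma once the degree of $\aff$ is fixed).

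The one point that genuinely needs care — and the main obstacle — is the $y$-degree bookkeeping: one must check that the lift never pushes $\deg(\aff)$ above $a$ (ensured by imposing $\deg(u) < a$) and that $\deg(\coaff\T{k})$ never exceeds $\pdeg - a$ (ensured because both $\pol$ and $\aff\T{k}\coaff\T{k}$ have $y$-degree at most $\pdeg$, hence $\deg(W_0) \le \pdeg$ and the quotient $v$ has $y$-degree at most $\pdeg - a$). Without this control the limits would a priori only live in $\Kx[[y]]$, whereas the theorem asserts honest polynomials in $\Kx[y]$ of the stated degrees.
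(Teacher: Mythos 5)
Your proof is correct and takes essentially the same route as the paper: the paper formally invokes the Weierstrass preparation theorem (citing Zariski--Samuel) and then immediately sketches exactly your elementary alternative --- writing $A=a_0(y)+xa_1(y)+\cdots$ and $B=b_0+xb_1(y)+\cdots$ with $b_0\in\K\setminus\{0\}$ and $\deg a_i<\deg a_0$ for $i\ge 1$, and extracting coefficients of $x^0,x^1,\dots$ to determine them uniquely. Your write-up simply supplies the details the paper leaves implicit, namely that each coefficient extraction is the B\'ezout-type equation $cu+\overline{A}v=W_0$ solved uniquely by Euclidean division by the monic $\overline{A}$, together with the $y$-degree bookkeeping.
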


In the case at hand, one may as well derive existence and uniqueness
of $A$ and $B$ (together with a slow algorithm to compute them) by
writing their unknown coefficients as $A=a_0(y)+ x a_1(y) + \cdots$
$B=b_0+xb_1(y)+\cdots$, with $b_0$ in $\K\setminus\{0\}$ and all $a_i$'s ($i
\ge 1$) of degree less than that of $a_0$. Extracting coefficients of
$x^0,x^1,\dots$, we deduce that the relation $Q=AB$ defines the $a_i$'s
and $b_i$'s uniquely.

In what follows, $\aff$ is called the \emph{affine factor} of $\pol$.
Remark that if we start from $\pol$ in $\trcSerPol$, we can still
define its affine factor as a polynomial in $\trcSerPol$, by reducing
modulo $x^d$ the affine factor of an arbitrary lift of $\pol$ to
$\Kx[y]$ (the construction above shows that the result is independent 
of the lift).

Our algorithm will compute the affine factors $(\aff_i)_{1\le i\le
  \nrts}$ of the shifts $(\pol_i)_{1\le i\le \nrts}$ at some
prescribed precision $d$ in $x$, having as input $\pol$ and the
shifting elements $(\rtpol_i + x^{\rtxpt_i}y)_{1\le i\le \nrts}$. A
factorization $\pol_i = \aff_i \coaff_i$ can be computed modulo any
power $x^d$ of $x$ from the knowledge of $\pol_i$ by means of Hensel
lifting \cite[Algo.\,Q]{Musser75}, doubling the precision at each
iteration. However, the above-mentioned degree bounds indicate that
neither the shifts $(\pol_i)_i$ nor the cofactors $(\coaff_i)_i$ may
be computed modulo $x^d$ in time quasi-linear in $\deg(\pol)$ and
$\prc$: the key of our algorithm is to show how to compute the
affine factors $\aff_i$ at precision $d$ directly from $Q$ within the
prescribed time bounds.  (Hensel lifting factorization techniques were
also used in~\cite{berthomieu_polynomial_2013}, but in a context
without the degree constraints that prevent us from computing the
shifts $\pol_i$). Hereafter, $A \quo B$ and $A \rem B$ denote the
quotient and the remainder in the division of the polynomial $A$ by
the monic polynomial $B$.

The input of the algorithm is the polynomial $Q$ known modulo $x^d$,
as output, we compute the affine factors $A_i$ of the shifts at
respective precisions $d-s_i$, together with the valuation $s_i$; if
$s_i \ge d$, we detect it and return $(0,d)$. 
The initialization consists in computing the affine factors of the
$x$-constant polynomials $({\pol_i}\atzero)_{1\le i\le \nrts}$. If
these polynomials are known, this is straightforward: the affine
factor of ${\pol_i}\atzero$ is itself divided by its leading
coefficient, which is a nonzero constant from $\field$. It turns out
that computing these polynomials is not an issue; remark that the sum of
their degrees is at most $\mult_1 + \cdots + \mult_\nrts \le
\deg(\pol)$. Explicitly, we first compute the remainders $(\pol(\rtpol_i +
x^{\rtxpt_i} y) \rem y^{\mult_i+1})_i$ via fast modular reduction
techniques; then, we can both retrieve the valuations $(\val_i)_i =
(\valx(\pol(\rtpol_i + x^{\rtxpt_i} y)))_i$ (or, more precisely,
$\val^*_i=\min(\val_i, d)$), and, when $\val_i < d$, the $x$-constant
terms of $\pol_i=x^{-\val_i} \pol(\rtpol_i + x^{\rtxpt_i} y)$ to carry
out the initialization step (\cref{line:init} to \cref{line:endinit}
in \cref{alg:affine_factors}).

Before continuing to describe the algorithm, we detail one of its main
building blocks (\cref{alg:shifted_remaindering}): the fast
computation of simultaneous shifted remainders via multiple modular
reduction. 

\begin{algorithm}[h]
  \caption{\textbf{:} \algoname{ShiftedRem}}
  \label{alg:shifted_remaindering}
  \begin{algorithmic}[1]
    \Require
      a commutative ring $\ring$, a polynomial $\pol \in \ring[y]$, and triples
      $(\aff_i,\rt_i,r_i)_{1\le i\le \nrts} \in \ring[y] \times \ring \times
      \ring$, with the $\aff_i$'s monic.
    \Ensure
      the remainders $\pol(\rt_i + r_i y) \rem \aff_i$ for $1\le i \le \nrts$.
      \State $(\bar\aff_i)_{1\le i\le \nrts} \ass (\sum_{0\le j\le\delta_i} r_i^{\delta_i-j} a_{i,j} y^j)_{1\le i\le\nrts}$ \\
      \hfill where $(\aff_i)_{1\le i\le \nrts} = (\sum_{0\le j\le\delta_i} a_{i,j} y^j)_{1\le i\le\nrts}$ with $a_{i,\delta_i}=1$
    \State $(\hat\aff_i)_{1\le i\le \nrts} \ass (\bar\aff_i(y-\rt_i))_{1\le i\le \nrts}$
    \State $(\hat\rmd_i)_{1\le i\le \nrts} \ass$ $(\pol \rem \hat\aff_i)_{1 \le i \le \nrts}$
    \State $(\rmd_i)_{1\le i\le\nrts} \ass (\hat \rmd_i(\rt_i+r_i y))_{1 \le i \le \nrts}$
    \State \Return $(\rmd_i)_{1\le i\le\nrts}$
  \end{algorithmic}
\end{algorithm}

\begin{proposition}
  \label{prop:shifted_remaindering}
  \cref{alg:shifted_remaindering} is correct and uses
  \[
    \Osoft{(\deg(\pol) + \deg(\aff_1 \cdots \aff_\nrts))}
  \]
  operations in $\ring$.
\end{proposition}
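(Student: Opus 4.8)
The plan is to prove correctness and the cost estimate separately. Throughout, write $\delta_i = \deg(\aff_i)$ and $N = \deg(\pol)$; since every $\aff_i$ is monic, $\aff_1\cdots\aff_\nrts$ is monic of degree $D := \sum_{1\le i\le\nrts}\delta_i$, so the claimed bound reads $\Osoft(N+D)$. Note also that the successive transforms of $\aff_i$ keep it monic of degree $\delta_i$, and that $\deg(\hat\rmd_i)<\delta_i$.

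\emph{Correctness.} The one identity to check is
\[
  \hat\aff_i(\rt_i + r_i y) \;=\; r_i^{\delta_i}\,\aff_i(y),
\]
which is immediate from the first two steps: $\hat\aff_i(\rt_i + r_i y) = \bar\aff_i(r_i y) = \sum_{0\le j\le\delta_i} r_i^{\delta_i-j} a_{i,j}\,(r_i y)^j = r_i^{\delta_i}\sum_{j} a_{i,j} y^j$. The third step produces a division with remainder $\pol = g_i\,\hat\aff_i + \hat\rmd_i$ with $g_i = \pol\quo\hat\aff_i \in \ring[y]$ and $\deg(\hat\rmd_i) < \delta_i$. Applying the ring homomorphism $\ring[y]\to\ring[y]$, $h \mapsto h(\rt_i + r_i y)$, to this equation and using the displayed identity gives $\pol(\rt_i + r_i y) = \big(r_i^{\delta_i}\,g_i(\rt_i + r_i y)\big)\aff_i(y) + \rmd_i$, where $\rmd_i = \hat\rmd_i(\rt_i + r_i y)$ is the value returned in the last step. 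The first summand lies in $\ring[y]$, and $\deg(\rmd_i) \le \deg(\hat\rmd_i) < \delta_i = \deg(\aff_i)$ because substituting the degree-$\le 1$ polynomial $\rt_i + r_i y$ into $\hat\rmd_i$ cannot raise the degree. Since $\aff_i$ is monic, division with remainder by $\aff_i$ in $\ring[y]$ is unique, hence $\rmd_i = \pol(\rt_i + r_i y)\rem\aff_i$. This argument uses nothing about $r_i$ or $\rt_i$, so it also covers degenerate inputs such as $\delta_i=0$ or $r_i$ a zero divisor.

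\emph{Cost.} I would bound the four computational steps in turn. Forming the $\bar\aff_i$: for each $i$, compute the powers $r_i^0,\dots,r_i^{\delta_i}$ iteratively with $O(\delta_i)$ multiplications, then scale the coefficients of $\aff_i$ with $\delta_i+1$ more; summing over $i$ this is $O(D)$ (we may discard any $\aff_i=1$, which returns $0$; then each remaining $\delta_i\ge 1$, so $\nrts\le D$ and the per-index overhead is absorbed). Forming $\hat\aff_i = \bar\aff_i(y-\rt_i)$: this is a Taylor shift of a degree-$\delta_i$ polynomial, computable in $\Osoft(\delta_i)$ operations over the commutative ring $\ring$ (e.g.\ by the standard divide-and-conquer, which uses only ring operations), so the total is $\Osoft(D)$ by superadditivity of the polynomial-multiplication cost. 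Computing the $\hat\rmd_i = \pol\rem\hat\aff_i$: this is a simultaneous reduction of a degree-$N$ polynomial modulo the $\nrts$ monic polynomials $\hat\aff_i$ of total degree $D$; building the subproduct tree of the $\hat\aff_i$ and descending it with $\pol$ costs $\Osoft(N+D)$. Computing $\rmd_i = \hat\rmd_i(\rt_i + r_i y)$: as $\deg(\hat\rmd_i)<\delta_i$, apply a scaling $y\mapsto r_i y$ (reusing the powers of $r_i$ already computed, $O(\delta_i)$ operations) followed by a Taylor shift by $\rt_i$ ($\Osoft(\delta_i)$ operations), for a total of $\Osoft(D)$. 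Adding the four contributions yields $\Osoft(N+D) = \Osoft(\deg(\pol) + \deg(\aff_1\cdots\aff_\nrts))$.

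The delicate point is the cost model over an \emph{arbitrary} commutative ring $\ring$: one must invoke the variants of Taylor shift and of simultaneous Euclidean reduction that are quasi-linear over \emph{any} commutative ring — in particular the divide-and-conquer Taylor shift rather than the factorial-based formula (the latter would need small integers to be invertible), and the subproduct-tree reduction, which only divides by monic polynomials and does not require the moduli to be pairwise coprime. Everything else is elementary degree bookkeeping plus the uniqueness-of-monic-division argument used for correctness.
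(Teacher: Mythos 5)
Your proof is correct and follows essentially the same route as the paper's: the key identity $\hat\aff_i(\rt_i+r_i y)=r_i^{\delta_i}\aff_i(y)$ combined with uniqueness of division with remainder by a monic polynomial for correctness, and the standard subproduct-tree simultaneous remaindering for the $\Osoft(\deg(\pol)+\delta_1+\cdots+\delta_\nrts)$ cost. You are in fact slightly more careful than the paper on the degree bound $\deg(\rmd_i)<\deg(\aff_i)$ and on insisting that the Taylor-shift subroutine be one that is quasi-linear over an arbitrary commutative ring.
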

\begin{proof}
  Let $i \in \{1,\ldots,\nrts\}$. Since $\hat\aff_i$ is monic, the remainder
  $\hat\rmd_i = \pol \rem \hat\aff_i$ is well-defined, and $\pol = P_i \hat\aff_i
  + \hat\rmd_i$ with $\deg(\hat\rmd_i) < \deg(\pol)$ and $P_i \in \ring[y]$. Then,
  we have
  \begin{align*}
    \pol(\rt_i + r_i y) & = P_i(\rt_i + r_i y) \hat\aff_i(\rt_i + r_i y) + \hat\rmd_i(\rt_i + r_i y) \\
      & = P_i(\rt_i + r_i y) \bar\aff_i(r_i y) + \rmd_i(y) \\
      & = P_i(\rt_i + r_i y) r_i^\delta \aff_i(y) + \rmd_i(y),
  \end{align*}
  which ensures $\rmd_i = \pol(\rt_i + r_i y) \rem \aff_i(y)$, hence the
  correctness.

  Concerning the cost bound, the polynomial $\bar\aff_i$ is computed using at
  most $2\delta_i$ multiplications in $\ring$, where $\delta_i = \deg(\aff_i)$,
  and then $\hat\aff_i$ is computed by fast shifting using $\Osoft(\delta_i)$
  operations in $\ring$ \cite[Thm.\,9.15]{von_zur_gathen_modern_2013}. The
  conclusion follows, since fast remaindering can be used to compute all
  remainders $(\hat\rmd_1,\ldots,\hat\rmd_\nrts)$ simultaneously in
  $\Osoft{(\deg(\pol) + \delta_1 + \cdots + \delta_\nrts)}$ operations in
  $\ring$. Indeed, we start by computing the subproduct tree in
  $\Osoft(\delta_1+\cdots+\delta_\nrts)$ operations
  \cite[Lem.\,10.4]{von_zur_gathen_modern_2013}, which gives us in particular
  the product $\hat\aff_1 \cdots \hat\aff_\nrts$. Then, we compute the
  remainder $\hat\rmd = \pol \rem \hat\aff_1 \cdots \hat\aff_\nrts$, which can
  be done in $\Osoft{(\deg(\pol) + \delta_1 + \cdots + \delta_\nrts)}$
  operations in $\ring$ using fast division
  \cite[Thm.\,9.6]{von_zur_gathen_modern_2013}. Finally, the sought
  $\hat\rmd_i = \hat\rmd \bmod \hat\aff_i$ are computed by going down the
  subproduct tree, which costs $\Osoft{(\delta_1 + \cdots + \delta_\nrts)}$
  operations in $\ring$ \cite[Cor.\,10.17]{von_zur_gathen_modern_2013}.
\end{proof}

\begin{algorithm}[t]
  \caption{\textbf{:} \algoname{AffineFacOfShifts}}
  \label{alg:affine_factors}
  \begin{algorithmic}[1]
    \Require
      a precision $\prc\in\ZZp$, a polynomial $\pol \in \trcSerPol$ such that
      $\polz \neq 0$, and triples $(\rtpol_i,\rtxpt_i,\mult_i)_{1\le i\le \nrts}
      \subset \trcSer \times \ZZp \times \ZZp$.
    \Ensure
      $(\aff_i,\val_i)_{1 \le i \le \nrts}$ with $(\aff_i,\val_i)= (0,\prc)$ if
      $\pol(\rtpol_i + x^{\rtxpt_i} y) = 0$ in $\trcSerPol[\prc]$, and otherwise $\val_i =
      \valx(\pol(\rtpol_i + x^{\rtxpt_i} y)) < \prc$ and $\aff_i \in
      \trcSerPol[\prc-\val_i]$ is the affine factor of
      $Q_i=x^{-\val_i}\pol(\rtpol_i + x^{\rtxpt_i} y)$ at precision $d-s_i$.
    \Assume 
      $\mult_i$ is such that
      $\aff_i = 0$ or $\deg(\aff_i) \le \mult_i$, for $1 \le i \le \nrts$.

      \State \label{line:init} $\alive \ass (1,\ldots,\nrts)$ \hfill /* \emph{list of not yet computed factors} */
      \State \label{line:init_shiftedrem} $(\rmd_i)_{1 \le i\le \nrts} \ass \algoname{ShiftedRem}(\trcSer,\pol,(y^{\mult_i+1},\rtpol_i,x^{\rtxpt_i})_{1 \le i\le \nrts})$
      \State /* \emph{Process trivial affine factors} */
      \For{$1 \le i \le \nrts$ such that $\rmd_i = 0$}
        \State $(\aff_i,\val_i) \ass (0,\prc)$, and
               remove $i$ from $\alive$
      \EndFor
      \State /* \emph{Set valuations and compute affine factors $\bmod \:x$} */
      \For{$i \in \alive$}
        \State $\val_i \ass \valx(\rmd_i)$
        \State $\bar\rmd_i \in \K[y] \,\ass (x^{-\val_i}\rmd_i)\atzero$
        \State $\coaffi_i \in \field\setminus\{0\} \,\ass $ inverse of the leading coefficient of $\bar\rmd_i$
        \State $\aff_i \in \trcSerPol[1] \,\ass \coaffi_i \bar\rmd_i$
      \EndFor \label{line:endinit}
      \State /* \emph{Each iteration doubles the precision} */
      \For{$1 \le k \le \lceil\log_2(\prc)\rceil$}
        \For{$i \in \alive$ such that $\prc-\val_i \le 2^{k-1}$}
          \State remove $i$ from $\alive$
        \EndFor
        \State $K \ass 2^{k-1} ; \;\; (\delta_i)_{i\in\alive} \ass (\min(K,\prc-\val_i - K))_{i\in\alive}$
        \State \rule[0.03cm]{0pt}{\baselineskip} /* \emph{Lift the affine factors $(\aff_i)_i$ to precisions $\delta_i+K$} */
        \State \label{line:rem_a} $(\rmd_i)_{i\in\alive} \ass \algoname{ShiftedRem}(\trcSer,\pol,(\bar\aff_i,\rtpol_i,x^{\rtxpt_i})_{i\in\alive})$, \\
          \hspace{1.1cm} \rule[-0.15cm]{0pt}{\baselineskip} where $\bar\aff_i$ is $\aff_i$ lifted into $\trcSerPol$
        \State \label{line:lift_a_start} $(\aff_{i\top} \in \trcSerPol[\delta_i])_{i\in\alive} \ass ((x^{-\val_i-K} \rmd_i \coaffi_i) \rem \aff_i)_{i\in\alive}$, \\
          \hspace{1.1cm} \rule[-0.15cm]{0pt}{\baselineskip} with $x^{-\val_i-K} \rmd_i$, $\coaffi_i$, and $\aff_i$ truncated at precision $\delta_i$
        \State \label{line:lift_a_end} $(\aff_i \in \trcSerPol[\delta_i+K])_{i\in\alive} \;\ass (\aff_i + x^{K} \aff_{i\top})_{i\in\alive}$
        \State \rule[0.05cm]{0pt}{\baselineskip} /* \emph{Find the cofactor inverses $(\coaffi_i)_i$ at precisions $\delta_i+K$} */
        \State \label{line:rem_aa} $(S_i)_{i\in\alive}  \ass \algoname{ShiftedRem}(\trcSer,\pol,(\bar\aff_i^2,\rtpol_i,x^{\rtxpt_i})_{i\in\alive})$, \\
          \hspace{1.1cm} \rule[-0.15cm]{0pt}{\baselineskip} where $\bar\aff_i$ is $\aff_i$ lifted in $\trcSerPol$
        \State \label{line:lift_c} $(\coaffi_i \in \trcSerPol[\delta_i+K])_{i\in\alive} \ass (((x^{-\val_i} S_i) \quo \aff_i)^{-1} \rem \aff_i)_{i\in\alive}$, \\
          \hspace{1.1cm} \rule[-0.1cm]{0pt}{\baselineskip} with $x^{-\val_i} S_i$ and $\aff_i$ truncated at precision $\delta_i+K$
      \EndFor
    \State \Return $(\aff_i,\val_i)_{1\le i\le\nrts}$
  \end{algorithmic}
\end{algorithm}

Now, let us describe how we implement the Hensel lifting strategy to manage to
compute the sought affine factors without fully computing the shifts. In
addition to the affine factors, we will make use of partial information on the
inverse of the cofactor: we compute this inverse modulo the affine factor.
Let $1 \le i \le \nrts$ and assume that we have computed, at precision $K$,
\begin{itemize}
  \item the affine factor $\aff_i \in \trcSerPol[K]$ of $\pol_i \bmod x^{K}$,
  \item $\coaffi_i = \coaff_i^{-1} \rem \aff_i \in \trcSerPol[K]$, where
    $\coaff_i\in \trcSerPol[K]$ denotes the cofactor such that $\aff_i \coaff_i
    = \pol_i \bmod x^{K}$.
\end{itemize}
Note that $\coaff_i$ is invertible as a polynomial of $\trcSerPol[K]$ since by
definition ${\coaff_i}\atzero \in \field\setminus\{0\}$. Thus, our requirement
is that the inverse of $\coaff_i$ coincides with $\coaffi_i$ when working modulo
$\aff_i$.

Now, we want to find similar polynomials when we increase the
precision to $2K$. The main point concerning efficiency is that we
will be able to do this by only considering computations modulo the
affine factors $\aff_i$ and their squares; remember that we control
the sum of their degrees. In the algorithm, we increase for each $i$
the precision from $K$ to $K+\delta_i$, which is taken as the minimum of $2K$
and $\prc-\val_i$: in the latter case, this is the last iteration
which affects $\aff_i$, since it will be known at the wanted precision
$\prc-\val_i$.

First, we use fast remaindering to get $R_i= \pol(\rtpol_i + x^{\rtxpt_i}
y) \rem \aff_i$ at precision $d$ in $x$, simultaneously for all $i$
(see \cref{line:rem_a}); this gives us $ \pol_i \rem
\aff_i=x^{-\val_i}R_i\rem \aff_i $ at
precision $d-\val_i$, and thus $K+\delta_i$. Since $\aff_i$ is the
affine factor of $\pol_i$ at precision $K$, $\pol_i \rem \aff_i$ is
divisible by~$x^K$.

We then look for $\aff_{i\top} \in \trcSerPol[\delta_i]$ such that
$\hat\aff_i = \aff_i + x^K \aff_{i\top}$ is the affine factor of
$\pol_i$ at precision $K+\delta_i$; to ensure that $\hat\aff_i$ is still
monic, we require that $\deg(\aff_{i\top}) < \deg(\aff_i)$. Thus, we
can determine $\aff_{i\top}$ by working modulo $\aff_i$: having
\[
  (\aff_i + x^K \aff_{i\top}) (\coaff_i + x^K \coaff_{i\top}) = \pol_i ,
\]
at precision $K+\delta_i$, for some $\coaff_{i\top}\in \trcSerPol[\delta_i]$, implies that the
identity
\[
  \aff_{i\top} \coaff_i = x^{-K} \pol_i
\]
holds modulo $\aff_i$ and at precision $\delta_i$. Multiplying by $\coaffi_i =
\coaff_i^{-1}$ on both sides yields
\[
  \aff_{i\top} = (x^{-K} \pol_i \coaffi_i) \rem \aff_i = (x^{-K-\val_i} \rmd_i \coaffi_i) \rem \aff_i \ .
\]
Therefore,
\cref{line:lift_a_start} and \cref{line:lift_a_end} correctly lift the affine
factor of $\pol_i$ from precision $K$ to precision $K+\delta_i$.

\medskip

From now on, we work at precision $K+\delta_i$, and, as in the
pseudo-code, we denote by $\aff_i$ the affine factor obtained through
the lifting step above (that is, $\aff_i \ass \hat\aff_i$). Besides,
let $\coaffi_i$ now denote the cofactor inverse at precision
$K+\delta_i$: $\coaffi_i = \coaff_i^{-1} \rem \aff_i$, where $\coaff_i
\in \trcSerPol[K+\delta_i]$ is the cofactor such that $\pol_i = \aff_i
\coaff_i$. Our goal is to compute $C_i$, without computing $B_i$ but only
$\coaff_i \rem \aff_i$.

We remark that the remainder $S_i = \pol(\rtpol_i+x^{\rtxpt_i}y) \rem
\aff_i^2$ (as in \cref{line:rem_aa}) is such that $x^{-\val_i} S_i =
\pol_i \rem \aff_i^2 = \aff_i (\coaff_i \rem \aff_i)$; $x^{-\val_i}
S_i$ is known at precision $d-s_i \ge K+\delta_i$. Thus,
\[
  (x^{-\val_i} S_i) \quo \aff_i = \coaff_i \rem \aff_i \ ,
\]
and therefore
$\coaffi_i$ can be obtained as
\[
  \coaffi_i = \coaff_i^{-1} \rem \aff_i = ((x^{-\val_i} S_i) \quo \aff_i)^{-1} \rem \aff_i.
\]
This shows that \cref{line:lift_c} correctly computes $\coaffi_i$ at precision
$K+\delta_i$.

\begin{proposition}
  \label{prop:compl_AffineFactorsOfShifts}
  \cref{alg:affine_factors} is correct and uses
  \[
    \Osoft\big(\prc(\deg(\pol) + \mult_1 + \cdots + \mult_\nrts)\big)
  \]
  operations in $\field$.
\end{proposition}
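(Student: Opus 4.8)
The plan is to treat correctness and the cost bound in turn, the bulk of the correctness reasoning being already laid out before the statement. For correctness, the preceding discussion supplies the two lifting identities — for $\aff_i$ on \cref{line:lift_a_start}--\cref{line:lift_a_end}, and for $\coaffi_i$ on \cref{line:lift_c} — and, together with \cref{thm:affine_factor}, the loop invariant: at the start of the $k$-th pass of the main loop each still-active $\aff_i$ is the affine factor of $\pol_i$ at precision $2^{k-1}$ and $\coaffi_i$ represents $\coaff_i^{-1}$ modulo $\aff_i$ at that precision, and after the pass both hold at precision $2^{k-1}+\delta_i = \min(2^k,\prc-\val_i)$. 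I would run an induction on $k$, so only the base case remains, namely the initialization on \cref{line:init}--\cref{line:endinit}. There the requirement $\deg(\aff_i)\le\mult_i$ (in the nontrivial case $\aff_i\neq0$) is what makes it work: $\aff_i$ monic and ${\coaff_i}\atzero\in\field\setminus\{0\}$ force ${\pol_i}\atzero={\aff_i}\atzero\,{\coaff_i}\atzero$ to be nonzero of $y$-degree $\le\mult_i$, so reduction modulo $y^{\mult_i+1}$ commutes with the $x$-grading and loses nothing: $\rmd_i=0$ exactly when $\val_i\ge\prc$, i.e.\ when $\pol(\rtpol_i+x^{\rtxpt_i}y)=0$ in $\trcSerPol$, while otherwise $\valx(\rmd_i)=\val_i$ and $(x^{-\val_i}\rmd_i)\atzero={\pol_i}\atzero$, whose monic part $\coaffi_i\bar\rmd_i$ is precisely the affine factor of $\pol_i$ at precision $1$, with $\coaffi_i$ its cofactor-inverse at precision $1$.

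For the cost bound, set $\mu=\mult_1+\cdots+\mult_\nrts$; then $\nrts\le\mu$ since each $\mult_i\ge1$, and $\sum_i\deg(\aff_i)\le\mu$ throughout (same requirement), hence also $\sum_i\deg(\aff_i^2)\le2\mu$ and $\sum_i(\mult_i+1)\le2\mu$. All arithmetic is carried out in $\trcSer=\Kx/(x^\prc)$, where one ring operation costs $\Osoft(\prc)$ operations in $\field$. I would split the work three ways. First, every call to $\algoname{ShiftedRem}$ — the one on \cref{line:init_shiftedrem} with moduli $y^{\mult_i+1}$, and those on \cref{line:rem_a} and \cref{line:rem_aa} with moduli $\aff_i$ and $\aff_i^2$ — costs, by \cref{prop:shifted_remaindering}, $\Osoft(\deg(\pol)+\mu)$ operations in $\trcSer$, i.e.\ $\Osoft((\deg(\pol)+\mu)\prc)$ operations in $\field$. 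Second, all the remaining work in one pass $k$ of the main loop — the $x$-adic extractions of $x^{-\val_i-K}\rmd_i$ and $x^{-\val_i}S_i$, the truncated product with $\coaffi_i$, and the Euclidean divisions by and inversions modulo $\aff_i$ — operates on objects of $y$-degree $\le\mult_i$ at $x$-precision $\le 2K=2^k$, hence costs $\Osoft(\sum_i\mult_i 2^k)=\Osoft(\mu 2^k)$ operations in $\field$. Third, the part of the initialization outside \cref{line:init_shiftedrem} (the valuations $\valx(\rmd_i)$, the extractions $\bar\rmd_i$, the scalar inversions and products) is $\Osoft(\mu\prc)$.

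Summing the second contribution over the $\lceil\log_2(\prc)\rceil$ passes gives a geometric sum bounded by $\Osoft(\mu\prc)$, and each pass additionally carries a $\algoname{ShiftedRem}$ cost $\Osoft((\deg(\pol)+\mu)\prc)$; since the soft-O absorbs the logarithmic number of passes, this and the one-time costs add up to the announced $\Osoft(\prc(\deg(\pol)+\mult_1+\cdots+\mult_\nrts))$. I expect the principal obstacle to be the precision bookkeeping rather than any single estimate: one has to verify that the truncations prescribed by the pseudo-code ($\delta_i$, $\delta_i+K$, and the truncations of $x^{-\val_i}\rmd_i$, $S_i$ and $\aff_i$) are simultaneously generous enough for the lifting identities to stay valid and tight enough that the second contribution grows geometrically in $k$. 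A subsidiary point is to confirm that $((x^{-\val_i}S_i)\quo\aff_i)$ on \cref{line:lift_c} is genuinely invertible modulo $\aff_i$ over $\trcSer$: it equals $\coaff_i\rem\aff_i$, which is a unit because ${\coaff_i}\atzero$ is a nonzero constant — hence a unit modulo $\aff_i$ already at precision $1$ — and one lifts.
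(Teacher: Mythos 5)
Your proposal is correct and follows essentially the same route as the paper: correctness via the Hensel-lifting identities established in the preceding discussion (organized as an explicit loop invariant), and the cost bound via \cref{prop:shifted_remaindering} together with the degree bounds $\deg(\aff_i)\le\mult_i$ and $\deg(\coaffi_i)<\mult_i$, summed over the $\lceil\log_2 \prc\rceil$ iterations. Your treatment of the initialization (why $\rmd_i=0$ iff $\val_i\ge\prc$, using $\deg({\pol_i}\atzero)\le\mult_i$) and your geometric-sum bound $\Osoft(\mu 2^k)$ for the non-\algoname{ShiftedRem} work are slightly more detailed than the paper's, which simply bounds each iteration by $\Osoft(\prc\mu)$, but the argument is the same.
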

\begin{proof}
  The correctness follows from the above discussion. Concerning the cost bound,
  we will use the following degree properties. Since $\aff_i$ is monic, we have
  the degree bound $\deg(\aff_i) = \deg({\aff_i}\atzero) \le \mult_i$ for all
  $i$ and at any iteration of the loop; and since $\coaffi_i$ is always
  computed modulo $\aff_i$, we also have $\deg(\coaffi_i) < \mult_i$.
  
  The cost of the initialization (\cref{line:init} to \cref{line:endinit}) is
  dominated by the computation of shifted remainders at
  \cref{line:init_shiftedrem}, which costs $\Osoft(\prc(\deg(\pol) + \mult_1 +
  \cdots + \mult_\nrts))$ operations in $\field$ according to
  \cref{prop:shifted_remaindering}. The same cost bound holds for each call to
  \algoname{ShiftedRem} at \cref{line:rem_a} or \cref{line:rem_aa}, since we
  have $\deg(\aff_i) \le \mult_i$ and $\deg(\aff_i^2) \le 2\mult_i$.
  
  At both \cref{line:lift_a_start} and \cref{line:lift_c}, the degrees of
  $\rmd_i$, $\coaffi_i$, and $\aff_i$ are at most $\mult_i$; besides, we have
  $\delta_i \le \prc$ and $\delta_i+K \le 2\prc$. Thus, the quotient and
  remainder computations use $\Osoft(\prc (\mult_1+\cdots+\mult_\nrts))$
  operations in $\field$ according to
  \cite[Thm.\,9.6]{von_zur_gathen_modern_2013}.

  Finally, at \cref{line:lift_c} we are performing the inversion of the
  polynomial $((x^{-\val_i} \rmd_i) \quo \aff_i)$ modulo $A_i$; it is invertible in
  $\trcSerPol[\delta_i+K]/(A_i)$ since its $x$-constant coefficient is a nonzero
  field element. As a consequence, this this inversion can be done in
  $\Osoft((\delta_i+K) \deg(\aff_i))$ field operations using Newton iteration
  \cite[Thm.\,9.4]{von_zur_gathen_modern_2013}, and altogether
  \cref{line:lift_c} costs $\Osoft(\prc (\mult_1+\cdots+\mult_\nrts))$
  operations in $\field$.

  Summing these cost bounds over the $\lceil\log_2(\prc)\rceil$
  iterations yields the announced total cost bound.
\end{proof}

\section{Fast series roots algorithm}
\label{sec:alg:roots}

In this section, we describe our fast algorithm for solving
\cref{pbm:series_root}. As explained above, it follows the divide and conquer
strategy of \cref{alg:dnc_roots}, with the main modification being that we
incorporate the fast computation of the affine factors of the shifts
(\cref{alg:affine_factors}). This leads to a better efficiency by yielding more
control on the degrees of the polynomials that are passed as arguments to the
recursive calls. Besides, we also propagate in recursive calls the information
of the multiplicities of the roots, which is then used as an input of
\cref{alg:affine_factors} to specify the list of degree upper bounds for the
affine factors.

We start with a lemma which states that taking affine factors preserves
reduced root sets.

\begin{lemma}\label{lem:root_set_aff_fact}
  Let $Q$ be in $\Kx[y]$, with $Q\atzero \ne 0$, and let $A \in
  \Kx[y]$ be its affine factor. Then, any reduced root set of $A$ at
  precision $d$ is a reduced root set of $Q$ at precision $d$.
\end{lemma}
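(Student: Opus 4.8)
The plan is to show that $Q$ and its affine factor $A$ have, up to unit multiples in $\Kx[y]$, the "same local behavior at $x=0$" along any branch, so that all the data entering the definition of a reduced root set — the valuations $s_i$, the shifted polynomials $Q_i$ modulo $x^{s_i+1}$ in the relevant range, and the root multiplicities $m_i$ — are identical whether computed from $Q$ or from $A$. Write $Q = AB$ with $A$ monic and $B\atzero \in \K\setminus\{0\}$ as in \cref{thm:affine_factor}. The key elementary observation is that $B$ is a \emph{unit} in $\Kx[y]$ localized appropriately: more precisely, for any $(f,t)$ with $f\in\K[x]$, $t\in\ZZp$, the polynomial $B(f + x^t y)$ has $x$-valuation $0$ and its reduction mod $x$ is the nonzero constant $B\atzero$; hence $\valx(B(f+x^ty)) = 0$. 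This immediately gives $\valx(Q(f+x^ty)) = \valx(A(f+x^ty))$, so the valuations $s_i$ agree, and moreover $Q(f+x^ty)/x^{s_i} = \big(A(f+x^ty)/x^{s_i}\big)\cdot B(f+x^ty)$ where the second factor reduces mod $x$ to the nonzero constant $B\atzero$.

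With that in hand I would proceed as follows. First, for the \emph{basic root set} condition: since $\rtset[A]=\rtset[Q]$ (the roots of $AB$ to any precision are the roots of $A$, because $B$ contributes only units along every branch — one can see this directly from $\valx(Q(f)) = \valx(A(f))$ for any $f\in\Kx$, using that $B\atzero\ne 0$ forces $\valx(B(f))=0$), any basic root set of $A$ to precision $d$ is automatically one of $Q$ to precision $d$; the first property ($\valx(Q(f_i+x^{t_i}y))\ge d$) transfers because that valuation equals $\valx(A(f_i+x^{t_i}y))\ge d$. Second, for the three numbered conditions of \cref{def:reduced_root}: the root multiplicity of a pair $(f,t)$ is defined via $R = \tilde{Q}/x^{\valx(\tilde Q)}$ where $\tilde Q = Q(g+x^{t-1}y)$, and by the displayed factorization $R\atzero$ equals $R_A\atzero$ (the analogous object for $A$) times the nonzero constant $B\atzero$; multiplying a univariate polynomial by a nonzero constant does not change root multiplicities, so $m_i$ is the same for $Q$ and $A$. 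Similarly $\deg(Q_i\atzero) = \deg(A_i\atzero)$ since $Q_i\atzero = A_i\atzero \cdot B\atzero$ with $B\atzero \ne 0$. Finally condition (3) requires $\sum_i m_i \le \deg(Q\atzero)$: here I would invoke $\deg(A\atzero) \le \deg(Q\atzero)$ (since $A\mid Q$ over $\Kx$, hence $A\atzero \mid Q\atzero$ over $\K$, or just from $\deg Q = \deg A + \deg B$), so $\sum_i m_i \le \deg(A\atzero) \le \deg(Q\atzero)$, and condition~(1) ($m_i\ge 1$) transfers verbatim.

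The main obstacle — really the only subtle point — is making the claim "$B$ behaves like a unit along every branch" fully rigorous, i.e.\ verifying carefully that $\valx(B(h))=0$ for every $h\in\Kx$ and, more delicately, that $\valx(B(f+x^ty))=0$ and $B(f+x^ty)\atzero = B\atzero$ for the shifted arguments, since $B(f+x^ty)$ is a genuine polynomial in $y$ over $\Kx$ and one must check its mod-$x$ reduction is the constant $B\atzero\in\K^\times$. This follows because substituting $y \mapsto f+x^ty$ into $B$ and then reducing mod $x$ is the same as substituting $y\mapsto f\atzero + 0 = f_0$ (the constant term of $f$, with the $x^t y$ part killed since $t\ge 1$) into $B\atzero\in\K$, which is the constant $B\atzero$ regardless of $f_0$ — here one uses crucially that $B\atzero$ has degree $0$ in $y$. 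Once this is pinned down, everything else is bookkeeping of the definitions. I would also note for $d\le 0$ the statement is vacuous since the reduced root set is $\{(0,0)\}$ by convention and $\rtset[A][d]=\Kx=\rtset[Q][d]$.
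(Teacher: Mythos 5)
Your proposal is correct and follows essentially the same route as the paper's proof: factor $Q=AB$ via \cref{thm:affine_factor} and observe that $B(f+x^ty)$ has $x$-valuation $0$ with mod-$x$ reduction equal to the nonzero constant $B\atzero$, so that root sets, valuations, multiplicities, and the degrees of the ${Q_i}\atzero$ all transfer between $Q$ and $A$. You are in fact somewhat more careful than the paper, which loosely calls $B(P)$ ``a unit'' and says the shifted polynomials ``differ by a constant factor''; your precise version (the cofactor reduces mod $x$ to $B\atzero\in\K\setminus\{0\}$, which is all one needs) is the right way to make that rigorous.
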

\begin{proof}
  The claim follows from the factorization $Q=AB$, with $B \atzero \in
  \K\setminus\{0\}$. Indeed, as a result, $B(P)$ is a unit in $\R$ for
  any $P$ in $\R$, hence $\rtset = \rtset[\aff]$ for any $d$;
  similarly, for any $(f,t)$, $Q(f+x^t y)$ and $A(f+x^t y)$ have the
  same valuation, say $s$, and $Q(f+x^t y)/x^s$ and $A(f+x^t y)/x^s$
  differ by a constant factor. 
  In particular, if $\{(f_i,t_i)\}_i$ is
  a basic root set of $A$, it is a basic root set of $Q$, and the
  multiplicities of $(f_i,t_i)$ in $A$ and $Q$ are the same.  This
  implies that if $\{(f_i,t_i)\}_i$ is in fact a reduced root set of
  $A$, it remains so for $Q$.
\end{proof}

We continue with a procedure that operates on polynomials in $\Kx[y]$,
without applying any truncation with respect to $x$: as such, this is
not an algorithm over $\K$, as it defines objects that are power
series in $x$, but it is straightforward to prove that it outputs a
reduced root set. Remark that this procedure uses affine factors at
``full precision'', that is, in $\Kx[y]$,
so \cref{alg:affine_factors} is not used yet.

\begin{algorithm}[h]
  \caption{\textbf{:} \algoname{SeriesRoots\infty}}
  \label{alg:roots}
  \begin{algorithmic}[1]
    \Require{$\prc \in \ZZp$ and $\pol \in \Kx[y]$ such that $\polz \neq 0$.}
    \Ensure{List of triples $(\rtpol_i, \rtxpt_i, \mult_i)_{1 \le i \le \nrts}
      \subset \K[x] \times \NN \times \ZZp$ formed by a reduced root set
      of $\pol$ to precision $\prc$ with multiplicities.}
    \If{$d = 1$}
      \State $(y_i,\mult_i)_{1\le i\le\nrts} \ass $ roots with multiplicity of $Q\atzero \in \field[y]$
      \State \Return $(y_i, 1, \mult_i)_{1 \le i \le \ell}$
    \Else
      \State $(f_i, t_i, \mult_i)_{1 \le i \le \nrts} \ass \algoname{SeriesRoots\infty}(Q, \ceil{d/2})$
      \State $(s_i)_{1 \le i \le \nrts} \ass (\valx(Q(f_i+x^{t_i} y))_{1 \le i \le \nrts}$
      \State $(\aff_i)_{1 \le i \le \nrts} \ass ({\rm AffineFactor}(Q(f_i+x^{t_i} y)/x^{s_i}))_{1 \le i \le \nrts}$
      \For{$1 \le i \le \nrts$}
        \If{$\val_i \ge \prc$}
          \State $(\rtpol_{i,1},\rtxpt_{i,1},\mult_{i,1}) \ass (0,0,\mult_i)$ and $\nrts_i \ass 1$
        \Else
          \State $(\rtpol_{i,j},\rtxpt_{i,j},\mult_{i,j})_{1\le j\le \nrts_i} \ass \algoname{SeriesRoots\infty}(\aff_i, \prc-\val_i)$
          \label{line:roots_recursive2}
        \EndIf
      \EndFor
      \State \Return $(\rtpol_i + x^{\rtxpt_i} \rtpol_{i,j}, \rtxpt_i + \rtxpt_{i,j},\mult_{i,j})_{1\le j\le \nrts_i, 1\le i \le \nrts}$.
    \EndIf
  \end{algorithmic}
\end{algorithm}

\begin{proposition}
  \label{prop:alg:roots}
  \cref{alg:roots} is correct.
\end{proposition}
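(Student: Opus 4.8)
The plan is to argue by strong induction on the precision $\prc$, mirroring the way \cref{thm:modular_roots_partition} composes reduced root sets from reduced root sets at smaller precisions. The base case $\prc = 1$ is handled directly: by \cref{lem:root_set_1} the pairs $(y_i,1)$ where $y_i$ ranges over the roots of $\polz$ form a basic root set, by \cref{lem:onestep_roots_partition} this basic root set is in fact reduced, and the multiplicities $\mult_i$ returned are exactly the multiplicities of $y_i$ in $\polz$, which (since $t=1$ and $\polz \ne 0$) coincide with the root multiplicities of $(y_i,1)$ in $Q$ in the sense of \cref{def:reduced_root}. So the $\prc=1$ output has the claimed form.

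For the inductive step with $\prc \ge 2$: the recursive call at line~5 returns, by the induction hypothesis applied to precision $\ceil{\prc/2} < \prc$, a reduced root set $(f_i,t_i)_{1\le i\le\nrts}$ of $Q$ to precision $\ceil{\prc/2}$ together with the correct multiplicities $\mult_i$. Since $Q\atzero \ne 0$, each $s_i = \valx(Q(f_i+x^{t_i}y))$ is finite; moreover, because $(f_i,t_i)_{i}$ is a \emph{basic} root set to precision $\ceil{\prc/2} \ge 1$, we have $s_i \ge \ceil{\prc/2} \ge 1$, hence $\prc - s_i < \prc$ and the recursive call at line~\ref{line:roots_recursive2} is on a strictly smaller precision. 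Now I would invoke \cref{lem:root_set_aff_fact}: writing $Q_i = Q(f_i + x^{t_i}y)/x^{s_i}$ (which has ${Q_i}\atzero \ne 0$ by construction), its affine factor $\aff_i$ has the same reduced root sets as $Q_i$, and the root multiplicity of any $(f,t)$ in $\aff_i$ equals that in $Q_i$. So applying the induction hypothesis to $(\aff_i, \prc - s_i)$ yields a reduced root set of $\aff_i$ — equivalently of $Q_i$ — to precision $\prc - s_i$, with correct multiplicities; when instead $s_i \ge \prc$, the singleton $\{(0,0)\}$ with multiplicity $\mult_i$ is by \cref{def:reduced_root} (the $d\le 0$ clause, since $\prc - s_i \le 0$) the reduced root set of $Q_i$ to precision $\prc - s_i$, and one checks the multiplicity bookkeeping matches the $d-s_i\le 0$ case already analyzed inside the proof of \cref{thm:modular_roots_partition}. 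In either case the hypotheses of \cref{thm:modular_roots_partition} are met, so its conclusion gives that the combined list $(f_i + f_{i,j}x^{t_i},\, t_i + t_{i,j})_{i,j}$ returned at the last line is a reduced root set of $Q$ to precision $\prc$; and the multiplicities $\mult_{i,j}$ propagated are exactly the root multiplicities of $(f_i + f_{i,j}x^{t_i}, t_i+t_{i,j})$ in $Q$, again by the identification of multiplicities in $Q$, in $Q_i$, and in $\aff_i$ established in \cref{thm:modular_roots_partition} and \cref{lem:root_set_aff_fact}. This closes the induction.

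The main obstacle I anticipate is not the structural recursion — that is essentially \cref{thm:modular_roots_partition} — but the careful verification that the \emph{multiplicity} data returned by the algorithm genuinely coincides with the root multiplicity in the sense of \cref{def:reduced_root} at every level, in particular threading the identity ``multiplicity in $Q$ = multiplicity in $Q_i$ = multiplicity in $\aff_i$'' through both the $s_i < \prc$ branch (where it relies on \cref{lem:root_set_aff_fact} plus the nested composition of multiplicities from \cref{thm:modular_roots_partition}) and the $s_i \ge \prc$ branch (where $(0,0)$ is returned and the multiplicity is inherited unchanged, exactly as in the $d - s_i \le 0$ case treated in the proof of \cref{thm:modular_roots_partition}). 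A secondary point to state cleanly is that line~7 is well-defined: ${Q_i}\atzero \ne 0$, so ${\rm AffineFactor}$ applies by \cref{thm:affine_factor}. One should also note that this proposition treats $\algoname{SeriesRoots\infty}$ as an idealized procedure over $\Kx[y]$ with no $x$-truncation, so there is nothing to check regarding precision loss; the passage to the truncated, cost-bounded algorithm using \cref{alg:affine_factors} is the subject of the following results.
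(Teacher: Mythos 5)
Your proposal is correct and follows essentially the same route as the paper's proof: induction on $\prc$ with \cref{lem:onestep_roots_partition} at the base case, the observation that $s_i \ge \ceil{\prc/2} \ge 1$ to guarantee termination, \cref{lem:root_set_aff_fact} to transfer reduced root sets (and multiplicities) between $\aff_i$ and $Q_i$, the trivial handling of the $s_i \ge \prc$ branch, and \cref{thm:modular_roots_partition} to assemble the result. Your extra care about the multiplicity bookkeeping is exactly the point the paper also flags (it defers it to the verification done inside the proof of \cref{thm:modular_roots_partition}).
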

\begin{proof}
  We prove this by induction on $d \ge 1$. By
  \cref{lem:onestep_roots_partition}, the algorithm is correct for the
  induction base case $d = 1$. Take $d > 1$, and assume that the
  algorithm is correct for all $d' < d$. Then, we obtain a reduced root set
  $(f_i, t_i)$ from the first recursive call, so in particular the
  valuations $s_i$ are at least equal to $d' \ge 1$. This shows that
  $d-s_i < d$, so the second recursive call is made at a lower
  precision, and the procedure terminates.

  By induction, in all cases, $(\rtpol_{i,j},\rtxpt_{i,j})_{1\le j\le
    \nrts_i}$ is a reduced root set of $Q_i$ to precision $d-s_i$:
  this is obvious when $s_i \ge d$, and follows from
  \cref{lem:root_set_aff_fact} when $s_i <
  d$. \cref{thm:modular_roots_partition} implies that $(\rtpol_i +
  x^{\rtxpt_i} \rtpol_{i,j}, \rtxpt_i + \rtxpt_{i,j})_{1\le j\le
    \nrts_i, 1\le i \le \nrts}$ is a reduced root set of $\pol$ to
  precision $\prc$. We verify that the integers $m_{i,j}$ are the associated
  multiplicities as we did in the proof of that theorem.
\end{proof}

Next, we describe a similar algorithm, where we maintain the input
polynomial with degree less than $d$ in $x$ (when it is the case, we
say that it is {\em reduced modulo $x^d$}). To differentiate this
version from the previous one and facilitate proving the correctness, we
add a superscript ${}^*$ to the objects handled here when they differ
from their counterpart in \cref{alg:roots}. Remark that we do not
claim that the output forms a reduced root set of $Q^*$, merely a
basic root set; we also do not claim that the $m_i$'s in the output are
the corresponding multiplicities.

\begin{algorithm}[h]
  \caption{\textbf{:} \algoname{SeriesRootsTrc}}
  \label{alg:rootsTrc}
  \begin{algorithmic}[1]
    \Require{$\prc \in \ZZp$ and $\pol^* \in \Kx[y]$ reduced modulo $x^d$ such that $\pol^*\atzero \neq 0$.}
    \Ensure{List of triples $(\rtpol_i, \rtxpt_i, \mult_i)_{1 \le i \le \nrts}
      \subset \K[x] \times \NN \times \ZZp$ formed by a basic root set
      of $\pol^*$ to precision $\prc$.}
    \If{$d = 1$}
      \State $(y_i,\mult_i)_{1\le i\le\nrts} \ass $ roots with multiplicity of $Q^*\atzero \in \field[y]$
      \State \Return $(y_i, 1, \mult_i)_{1 \le i \le \ell}$
    \Else
      \State $(f_i, t_i, \mult_i)_{1 \le i \le \nrts} \ass \algoname{SeriesRootsTrc}(Q^* \rem x^{\ceil{d/2}}, \ceil{d/2})$
      \State $(\aff^*_i, \val^*_i)_{1 \le i \le \nrts} \ass \algoname{AffineFacOfShifts}(Q^*, d, (\rtpol_i,\rtxpt_i,\mult_i)_{1\le i\le\nrts})$
      \For{$1 \le i \le \nrts$}
        \If{$\val^*_i = \prc$}
          \State $(\rtpol_{i,1},\rtxpt_{i,1},\mult_{i,1}) \ass (0,0,\mult_i)$ and $\nrts_i \ass 1$
        \Else
          \State $(\rtpol_{i,j},\rtxpt_{i,j},\mult_{i,j})_{1\le j\le \nrts_i} \ass \algoname{SeriesRootsTrc}(\aff^*_i, \prc-\val^*_i)$
          \label{line:roots_recursive2Trc}
        \EndIf
      \EndFor
      \State \Return $(\rtpol_i + x^{\rtxpt_i} \rtpol_{i,j}, \rtxpt_i + \rtxpt_{i,j},\mult_{i,j})_{1\le j\le \nrts_i, 1\le i \le \nrts}$.
    \EndIf
  \end{algorithmic}
\end{algorithm}

\begin{proposition}
  \label{prop:alg:rootstrunc}
  \cref{alg:rootsTrc} is correct.
\end{proposition}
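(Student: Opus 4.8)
The plan is to prove correctness by coupling a run of \cref{alg:rootsTrc} on $\pol^*$ with a run of \cref{alg:roots} on an arbitrary lift $\pol \in \Kx[y]$ of $\pol^*$ (so $\pol \equiv \pol^* \bmod x^\prc$ and $\pol\atzero = \pol^*\atzero \ne 0$), and then to invoke \cref{prop:alg:roots}. Concretely, I would prove by induction on $\prc$ the statement: \emph{for every lift $\pol$ of $\pol^*$, the call $\algoname{SeriesRootsTrc}(\pol^*, \prc)$ returns exactly the same list of triples as $\algoname{SeriesRoots\infty}(\pol, \prc)$.} The base case $\prc = 1$ is immediate, since both algorithms simply return the roots with multiplicity of $\pol^*\atzero = \pol\atzero$. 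For $\prc > 1$: the first recursive call, $\algoname{SeriesRootsTrc}(\pol^* \rem x^{\ceil{\prc/2}}, \ceil{\prc/2})$, returns, by the induction hypothesis with $\pol$ regarded as a lift of $\pol^* \rem x^{\ceil{\prc/2}}$ (whose $x^0$-coefficient is still $\pol^*\atzero \ne 0$ as $\ceil{\prc/2}\ge 1$), the same triples $(f_i,t_i,m_i)_i$ as the call $\algoname{SeriesRoots\infty}(\pol, \ceil{\prc/2})$ occurring in \cref{alg:roots}. By \cref{prop:alg:roots}, $(f_i,t_i)_i$ is then a reduced root set of $\pol$ to precision $\ceil{\prc/2}$ with the $m_i$ as the associated root multiplicities; in particular $s_i \defeq \valx(\pol(f_i+x^{t_i}y)) \ge \ceil{\prc/2} \ge 1$, and, writing $\pol_i = \pol(f_i+x^{t_i}y)/x^{s_i}$, the relation $\deg(\pol_i\atzero) \le m_i$ holds (this is the second requirement of \cref{def:reduced_root}).

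Next I would check that \cref{prop:compl_AffineFactorsOfShifts} applies to the call $\algoname{AffineFacOfShifts}(\pol^*,\prc,(f_i,t_i,m_i)_i)$. Its precondition — that the returned affine factor $\aff^*_i$ is $0$ or satisfies $\deg(\aff^*_i) \le m_i$ — is met: when $s_i \ge \prc$ the algorithm returns $\aff^*_i = 0$; when $s_i < \prc$ it returns the affine factor of $\pol_i$ at precision $\prc - s_i$, which, because taking affine factors commutes with truncation in $x$ (the remark after \cref{thm:affine_factor}), equals $\aff_i \bmod x^{\prc-s_i}$, where $\aff_i \in \Kx[y]$ is the affine factor of $\pol_i$ computed by \cref{alg:roots}; since $\aff_i$ is monic this truncation preserves its $y$-degree, so $\deg(\aff^*_i) = \deg(\aff_i) = \deg(\aff_i\atzero) = \deg(\pol_i\atzero) \le m_i$. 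Hence \cref{alg:affine_factors} returns $(\aff^*_i,s^*_i)$ with $s^*_i = \prc$ exactly when $s_i \ge \prc$ — the test used in \cref{alg:rootsTrc} — and otherwise $s^*_i = s_i < \prc$ with $\aff^*_i = \aff_i \bmod x^{\prc - s_i}$. It follows that, for every $i$, the branch selected in \cref{alg:rootsTrc} agrees with that in \cref{alg:roots}: in the trivial branch both set $(f_{i,1},t_{i,1},m_{i,1}) = (0,0,m_i)$ and $\nrts_i = 1$; in the other branch $\aff^*_i$ is a truncation of the lift $\aff_i$ with $\aff^*_i\atzero = \aff_i\atzero \ne 0$, and the induction hypothesis at precision $\prc - s_i < \prc$ gives that $\algoname{SeriesRootsTrc}(\aff^*_i, \prc - s_i)$ and $\algoname{SeriesRoots\infty}(\aff_i, \prc - s_i)$ return the same list. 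Assembling the outputs through the common formula $(f_i + x^{t_i}f_{i,j},\, t_i + t_{i,j},\, m_{i,j})_{i,j}$ shows that the two full runs return identical output, which closes the induction.

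Finally, by \cref{prop:alg:roots} the pairs $(f_i,t_i)$ in the output of $\algoname{SeriesRoots\infty}(\pol,\prc)$ — hence in that of $\algoname{SeriesRootsTrc}(\pol^*,\prc)$ — form a reduced, in particular basic, root set of $\pol$ to precision $\prc$; and since both conditions defining a basic root set in \cref{def:basic_root_set} are unchanged when $\pol$ is replaced by $\pol + x^\prc G$ for any $G \in \Kx[y]$, it is equally a basic root set of $\pol^* = \pol \bmod x^\prc$ to precision $\prc$, as claimed (the entries $m_i \in \ZZp$ being inherited from \cref{alg:roots}). The step I expect to be the main obstacle is discharging the precondition of \cref{alg:affine_factors} at each recursive level: this is exactly what forces the induction to carry all three components of the triples — so that the $m_i$ threaded through \cref{alg:rootsTrc} are genuinely the root multiplicities produced by \cref{alg:roots} — rather than only the pairs $(f_i,t_i)$, and it is why one must appeal to the full \emph{reduced}-root-set conclusion of \cref{prop:alg:roots} and not merely to the weaker basic-root-set property.
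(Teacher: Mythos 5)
Your proof is correct and follows essentially the same route as the paper's: an induction on $\prc$ showing that $\algoname{SeriesRootsTrc}$ on $\pol^*$ and $\algoname{SeriesRoots\infty}$ on a lift $\pol$ produce identical outputs, with the key points being that the affine factors agree modulo $x^{\prc-s_i}$ and that the reduced-root-set property from \cref{prop:alg:roots} discharges the degree precondition of \cref{alg:affine_factors}. The final transfer of the basic-root-set property from $\pol$ to $\pol^*$ is also exactly the paper's concluding step.
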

\begin{proof}
  We claim that for $d > 0$ and any $Q$ and $Q^*$ in $\Kx[y]$ such that
  $Q^* = Q \rem x^d$, the outputs of
  \algoname{SeriesRoots\infty}$(Q,d)$ and
  \algoname{SeriesRootsTrc}$(Q^*,d)$ are the same. Before proving this
  claim, remark that it implies the correctness of \cref{alg:rootsTrc}: we
  know that this output is a reduced, and thus basic, root set of $Q$
  to precision $d$.
  Since $Q$ and $Q^*$ are equal modulo $x^d$, one verifies easily that this
  output is thus a basic root set of $Q^*$ to precision $d$ as well.

  We prove the claim by induction on $d$. If $d=1$, the result is clear,
  as we compute the same thing on both sides.

  For $d > 1$, since $Q^* \rem x^{\lceil d/2\rceil}=Q \rem x^{\lceil
    d/2\rceil}$, the induction assumption shows that $(f_i, t_i,
  \mult_i)_{1 \le i \le \nrts} $ as computed in either
  \algoname{SeriesRoots\infty} or \algoname{SeriesRootsTrc} are the
  same. 

  The affine factors of the shifts of $Q$ and $Q^*$ differ, but they
  coincide at the precision we need.  Indeed, the equality $Q=Q^*
  \bmod x^d$ implies that for all $i$, $Q(f_i + x^{t_i}y)= Q^*(f_i +
  x^{t_i}y) \bmod x^d$. In particular, if $s_i < d$, these two
  polynomials have the same valuation $s_i$, and $Q(f_i +
  x^{t_i}y)/x^{s_i}=Q^*(f_i + x^{t_i}y)/x^{s_i} \bmod x^{d-s_i}$,
  which implies that their affine factors are the same modulo
  $x^{d-s_i}$. If $s_i \ge d$, then $Q^*(f_i + x^{t_i}y)$ vanishes
  modulo $x^d$.

  Remark that the assumption of \cref{alg:affine_factors} is
  satisfied: for all $i$, $m_i$ is the multiplicity of $(f_i,t_i)$ in
  $Q$; the definition of a reduced root set then implies that
  $\deg({Q_i}\atzero) \le m_i$, so that the same degree bounds holds
  for the affine factors of $Q^*(f_i + x^{t_i}y)/x^{s_i}$. As a
  result, for $i$ such that $s_i \ge d$, \cref{alg:affine_factors}
  returns $(0,s_i^*)=(0,d)$, whereas if $s_i < d$, it returns
  $(A_i^*,s_i)$, where $A_i^*$ is the truncation modulo $x^{d-s_i}$ of
  the affine factor $A_i$ of $Q_i$. In the first case, the polynomials
  $(\rtpol_{i,1},\rtxpt_{i,1},\mult_{i,1})$ are the same in both
  algorithms; in the second case, this is also true, by induction
  assumption. Our claim follows.
\end{proof}
    
\begin{proof}[Proof of \cref{thm:series_root}]
  To conclude the proof of \cref{thm:series_root}, it remains to
  estimate the cost of \cref{alg:rootsTrc}. Let $T(n, d)$ denote the
  cost of \cref{alg:rootsTrc} on input $d$ and $\pol$ of degree $n = \deg(Q)$.
  If $d = 1$, then $T(n, 1) = \univRoots(n)$.
  Otherwise, the cost is given by the following recursion:
  \[
    T(n, d) = T(n, d/2) + S(n, d, (n_1,\ldots,n_\ell)) + \sum_{i=1}^{\ell} T(n_i, d-s_i) \ ,
  \]
  where $S(n,d,(n_1,\ldots,n_\ell))$ is the cost of
  \algoname{AffineFactorsOfShifts} and $n_i = \deg(A^*_i)$. The
  degrees of the polynomials $A^*_i$, in \cref{alg:rootsTrc}, and $A_i$,
  in \cref{alg:roots}, are the same, except for those cases where $s_i
  \ge d$ and $A^*_i$ is actually zero.  By definition of a reduced root
  set, we have $$\sum_i \deg(A_i) \leq \deg(Q\atzero) \leq n,$$ which
  thus implies $\sum_i n_i \le n$, and $S(n, d, (n_1,\ldots,n_\ell)) \in
  \Osoft(dn)$. Note also that $s_i \geq d/2$ by the correctness of
  $\algoname{SeriesRootsTrc}$. Since $T(n,d)$ is at least linear in $n$,
  we then get $\sum_i T(n_i, d-s_i) \leq T(n, d/2)$.
  This gives the upper bound
  \[
    T(n,d) \le 2 T(n,d/2) + \Osoft(nd),
  \]
  from which we deduce that $T(n,d) = \Osoft(nd) + O(d \univRoots(n))$.
\end{proof}

Finally, we point out an optimization, which is not necessary to
establish our main result, but useful in practice: once the affine
factor of a shift has degree $1$, there is no need to continue the
recursion (the affine factor being monic, we can just read off its
root from its constant coefficient).  This is the analogue of the
situation described in the introduction, when we know enough terms of
the solution to make it possible to apply Newton iteration without
further branching.

\begin{acks}
  The research leading to these results has received funding from the People
  Programme (Marie Curie Actions) of the European Union's Seventh Framework
  Programme (FP7/2007-2013) under REA grant agreement no. 609405
  (COFUNDPostdocDTU).
\end{acks}

\bibliographystyle{ACM-Reference-Format}

\begin{thebibliography}{00}


\ifx \showCODEN    \undefined \def \showCODEN     #1{\unskip}     \fi
\ifx \showDOI      \undefined \def \showDOI       #1{{\tt DOI:}\penalty0{#1}\ }
  \fi
\ifx \showISBNx    \undefined \def \showISBNx     #1{\unskip}     \fi
\ifx \showISBNxiii \undefined \def \showISBNxiii  #1{\unskip}     \fi
\ifx \showISSN     \undefined \def \showISSN      #1{\unskip}     \fi
\ifx \showLCCN     \undefined \def \showLCCN      #1{\unskip}     \fi
\ifx \shownote     \undefined \def \shownote      #1{#1}          \fi
\ifx \showarticletitle \undefined \def \showarticletitle #1{#1}   \fi
\ifx \showURL      \undefined \def \showURL       #1{#1}          \fi
\providecommand\bibfield[2]{#2}
\providecommand\bibinfo[2]{#2}
\providecommand\natexlab[1]{#1}
\providecommand\showeprint[2][]{arXiv:#2}

\bibitem[\protect\citeauthoryear{Alekhnovich}{Alekhnovich}{2005}]%
        {alekhnovich_linear_2005}
\bibfield{author}{\bibinfo{person}{M. Alekhnovich}.}
  \bibinfo{year}{2005}\natexlab{}.
\newblock \showarticletitle{Linear {Diophantine} {equations} {over}
  {polynomials} and {soft} {decoding} of {Reed}-{Solomon} {codes}}.
\newblock \bibinfo{journal}{{\em IEEE Transactions on Information Theory\/}}
  \bibinfo{volume}{51}, \bibinfo{number}{7} (\bibinfo{date}{July}
  \bibinfo{year}{2005}), \bibinfo{pages}{2257--2265}.
\newblock
\showISSN{0018-9448}
\showDOI{%
\url{http://dx.doi.org/10.1109/TIT.2005.850097}}


\bibitem[\protect\citeauthoryear{Augot and Pecquet}{Augot and Pecquet}{2000}]%
        {augot_hensel_2000}
\bibfield{author}{\bibinfo{person}{D. Augot} {and} \bibinfo{person}{L.
  Pecquet}.} \bibinfo{year}{2000}\natexlab{}.
\newblock \showarticletitle{A {Hensel} lifting to replace factorization in
  list-decoding of algebraic-geometric and {Reed}-{Solomon} codes}.
\newblock \bibinfo{journal}{{\em IEEE Transactions on Information Theory\/}}
  \bibinfo{volume}{46}, \bibinfo{number}{7} (\bibinfo{year}{2000}),
  \bibinfo{pages}{2605--2614}.
\newblock
\showURL{%
\url{http://ieeexplore.ieee.org/xpls/abs_all.jsp?arnumber=887868}}


\bibitem[\protect\citeauthoryear{Bauch, Nart, and Stainsby}{Bauch
  et~al\mbox{.}}{2013}]%
        {BaNaSt13}
\bibfield{author}{\bibinfo{person}{J.-D. Bauch}, \bibinfo{person}{E. Nart},
  {and} \bibinfo{person}{H.~D. Stainsby}.} \bibinfo{year}{2013}\natexlab{}.
\newblock \showarticletitle{Complexity of {OM} factorizations of polynomials
  over local fields}.
\newblock \bibinfo{journal}{{\em LMS Journal of Computation and Mathematics\/}}
   \bibinfo{volume}{16} (\bibinfo{year}{2013}), \bibinfo{pages}{139--171}.
\newblock


\bibitem[\protect\citeauthoryear{Berthomieu, Lecerf, and Quintin}{Berthomieu
  et~al\mbox{.}}{2013}]%
        {berthomieu_polynomial_2013}
\bibfield{author}{\bibinfo{person}{J. Berthomieu}, \bibinfo{person}{G. Lecerf},
  {and} \bibinfo{person}{G. Quintin}.} \bibinfo{year}{2013}\natexlab{}.
\newblock \showarticletitle{Polynomial root finding over local rings and
  application to error correcting codes}.
\newblock \bibinfo{journal}{{\em Applicable Algebra in Engineering,
  Communication and Computing\/}} \bibinfo{volume}{24}, \bibinfo{number}{6}
  (\bibinfo{date}{July} \bibinfo{year}{2013}), \bibinfo{pages}{413--443}.
\newblock
\showISSN{0938-1279, 1432-0622}
\showDOI{%
\url{http://dx.doi.org/10.1007/s00200-013-0200-5}}


\bibitem[\protect\citeauthoryear{Chudnovsky and Chudnovsky}{Chudnovsky and
  Chudnovsky}{1986}]%
        {Chch86b}
\bibfield{author}{\bibinfo{person}{D.~V. Chudnovsky} {and}
  \bibinfo{person}{G.~V. Chudnovsky}.} \bibinfo{year}{1986}\natexlab{}.
\newblock \showarticletitle{On expansion of algebraic functions in power and
  Puiseux series, {I}}.
\newblock \bibinfo{journal}{{\em J. Complexity\/}} \bibinfo{volume}{2},
  \bibinfo{number}{4} (\bibinfo{year}{1986}), \bibinfo{pages}{271 -- 294}.
\newblock


\bibitem[\protect\citeauthoryear{Chudnovsky and Chudnovsky}{Chudnovsky and
  Chudnovsky}{1987}]%
        {ChCh86a}
\bibfield{author}{\bibinfo{person}{D.~V. Chudnovsky} {and}
  \bibinfo{person}{G.~V. Chudnovsky}.} \bibinfo{year}{1987}\natexlab{}.
\newblock \showarticletitle{On expansion of algebraic functions in power and
  {P}uiseux series, {II}}.
\newblock \bibinfo{journal}{{\em J. Complexity\/}} \bibinfo{volume}{3},
  \bibinfo{number}{1} (\bibinfo{year}{1987}), \bibinfo{pages}{1 -- 25}.
\newblock


\bibitem[\protect\citeauthoryear{Duval}{Duval}{1989}]%
        {Duval89}
\bibfield{author}{\bibinfo{person}{D. Duval}.} \bibinfo{year}{1989}\natexlab{}.
\newblock \showarticletitle{Rational {P}uiseux expansions}.
\newblock \bibinfo{journal}{{\em Composit. Math.\/}} \bibinfo{volume}{70},
  \bibinfo{number}{2} (\bibinfo{year}{1989}), \bibinfo{pages}{119--154}.
\newblock


\bibitem[\protect\citeauthoryear{Guruswami and Sudan}{Guruswami and
  Sudan}{1999}]%
        {guruswami_improved_1999}
\bibfield{author}{\bibinfo{person}{V. Guruswami} {and} \bibinfo{person}{M
  Sudan}.} \bibinfo{year}{1999}\natexlab{}.
\newblock \showarticletitle{Improved {decoding} of {Reed}--{Solomon} {codes}
  and {algebraic}-{geometric} {codes}}.
\newblock \bibinfo{journal}{{\em IEEE Transactions on Information Theory\/}}
  \bibinfo{volume}{45}, \bibinfo{number}{6} (\bibinfo{year}{1999}),
  \bibinfo{pages}{1757--1767}.
\newblock


\bibitem[\protect\citeauthoryear{Kung and Traub}{Kung and Traub}{1978}]%
        {KuTr78}
\bibfield{author}{\bibinfo{person}{H.~T. Kung} {and} \bibinfo{person}{J.~F.
  Traub}.} \bibinfo{year}{1978}\natexlab{}.
\newblock \showarticletitle{All algebraic functions can be computed fast}.
\newblock \bibinfo{journal}{{\em J. ACM\/}} \bibinfo{volume}{25},
  \bibinfo{number}{2} (\bibinfo{year}{1978}), \bibinfo{pages}{245--260}.
\newblock


\bibitem[\protect\citeauthoryear{Montes}{Montes}{1999}]%
        {Montes99}
\bibfield{author}{\bibinfo{person}{J. Montes}.}
  \bibinfo{year}{1999}\natexlab{}.
\newblock {\em \bibinfo{title}{Pol\'igonos de {N}ewton de orden superior y
  aplicaciones aritm\'eticas}}.
\newblock \bibinfo{thesistype}{Ph.D. Dissertation}.
  \bibinfo{school}{Universitat de Barcelona}.
\newblock


\bibitem[\protect\citeauthoryear{Musser}{Musser}{1975}]%
        {Musser75}
\bibfield{author}{\bibinfo{person}{D.~R. Musser}.}
  \bibinfo{year}{1975}\natexlab{}.
\newblock \showarticletitle{Multivariate polynomial factorization}.
\newblock \bibinfo{journal}{{\em J. ACM\/}} \bibinfo{volume}{22},
  \bibinfo{number}{2} (\bibinfo{date}{April} \bibinfo{year}{1975}),
  \bibinfo{pages}{291--308}.
\newblock
\showDOI{%
\url{http://dx.doi.org/10.1145/321879.321890}}


\bibitem[\protect\citeauthoryear{Newton}{Newton}{1736}]%
        {Newton1736}
\bibfield{author}{\bibinfo{person}{I. Newton}.}
  \bibinfo{year}{1736}\natexlab{}.
\newblock \bibinfo{booktitle}{{\em The Method of Fluxions and Infinite
  Series}}.
\newblock \bibinfo{publisher}{Henry Woodfall}.
\newblock


\bibitem[\protect\citeauthoryear{Nielsen and Beelen}{Nielsen and
  Beelen}{2015}]%
        {nielsen_sub-quadratic_2015}
\bibfield{author}{\bibinfo{person}{J.~S.~R. Nielsen} {and} \bibinfo{person}{P.
  Beelen}.} \bibinfo{year}{2015}\natexlab{}.
\newblock \showarticletitle{Sub-{quadratic} {decoding} of {one}-{point}
  {Hermitian} {codes}}.
\newblock \bibinfo{journal}{{\em IEEE Transactions on Information Theory\/}}
  \bibinfo{volume}{61}, \bibinfo{number}{6} (\bibinfo{date}{June}
  \bibinfo{year}{2015}), \bibinfo{pages}{3225--3240}.
\newblock
\showISSN{0018-9448}
\showDOI{%
\url{http://dx.doi.org/10.1109/TIT.2015.2424415}}


\bibitem[\protect\citeauthoryear{Poteaux and Rybowicz}{Poteaux and
  Rybowicz}{2011}]%
        {PoRy11}
\bibfield{author}{\bibinfo{person}{A. Poteaux} {and} \bibinfo{person}{M.
  Rybowicz}.} \bibinfo{year}{2011}\natexlab{}.
\newblock \showarticletitle{Complexity bounds for the rational
  {N}ewton-{P}uiseux algorithm over finite fields}.
\newblock \bibinfo{journal}{{\em Appl. Algebra Engrg. Comm. Comput.\/}}
  \bibinfo{volume}{22}, \bibinfo{number}{3} (\bibinfo{year}{2011}),
  \bibinfo{pages}{187--217}.
\newblock


\bibitem[\protect\citeauthoryear{Poteaux and Rybowicz}{Poteaux and
  Rybowicz}{2015}]%
        {PoRy15}
\bibfield{author}{\bibinfo{person}{A. Poteaux} {and} \bibinfo{person}{M.
  Rybowicz}.} \bibinfo{year}{2015}\natexlab{}.
\newblock \showarticletitle{Improving complexity bounds for the computation of
  {P}uiseux series over finite fields}. In \bibinfo{booktitle}{{\em ISSAC'15}}.
  \bibinfo{publisher}{ACM}, \bibinfo{pages}{299--306}.
\newblock


\bibitem[\protect\citeauthoryear{Roth and Ruckenstein}{Roth and
  Ruckenstein}{2000}]%
        {roth_efficient_2000}
\bibfield{author}{\bibinfo{person}{R.M. Roth} {and} \bibinfo{person}{G.
  Ruckenstein}.} \bibinfo{year}{2000}\natexlab{}.
\newblock \showarticletitle{Efficient {cecoding} of {Reed}-{Solomon} {codes}
  {beyond} {half} the {minimum} {distance}}.
\newblock \bibinfo{journal}{{\em IEEE Transactions on Information Theory\/}}
  \bibinfo{volume}{46}, \bibinfo{number}{1} (\bibinfo{year}{2000}),
  \bibinfo{pages}{246--257}.
\newblock


\bibitem[\protect\citeauthoryear{Shoup}{Shoup}{1991}]%
        {Shoup91}
\bibfield{author}{\bibinfo{person}{V. Shoup}.} \bibinfo{year}{1991}\natexlab{}.
\newblock \showarticletitle{A fast deterministic algorithm for factoring
  polynomials over finite fields of small characteristic}. In
  \bibinfo{booktitle}{{\em ISSAC'91}}. \bibinfo{publisher}{ACM Press},
  \bibinfo{pages}{14--21}.
\newblock


\bibitem[\protect\citeauthoryear{Sudan}{Sudan}{1997}]%
        {sudan_decoding_1997}
\bibfield{author}{\bibinfo{person}{Madhu Sudan}.}
  \bibinfo{year}{1997}\natexlab{}.
\newblock \showarticletitle{Decoding of {Reed}--{Solomon} {codes} beyond the
  {error}-{correction} {bound}}.
\newblock \bibinfo{journal}{{\em Journal of Complexity\/}}
  \bibinfo{volume}{13}, \bibinfo{number}{1} (\bibinfo{year}{1997}),
  \bibinfo{pages}{180--193}.
\newblock


\bibitem[\protect\citeauthoryear{von~zur Gathen and Gerhard}{von~zur Gathen and
  Gerhard}{2013}]%
        {von_zur_gathen_modern_2013}
\bibfield{author}{\bibinfo{person}{J. von~zur Gathen} {and} \bibinfo{person}{J.
  Gerhard}.} \bibinfo{year}{2013}\natexlab{}.
\newblock \bibinfo{booktitle}{{\em Modern {Computer} {Algebra}\/}
  (\bibinfo{edition}{3rd} ed.)}.
\newblock \bibinfo{publisher}{Cambridge University Press}.
\newblock


\bibitem[\protect\citeauthoryear{Walker}{Walker}{1978}]%
        {Walker78}
\bibfield{author}{\bibinfo{person}{R.~J. Walker}.}
  \bibinfo{year}{1978}\natexlab{}.
\newblock \bibinfo{booktitle}{{\em Algebraic Curves}}.
\newblock \bibinfo{publisher}{Springer-Verlag}, \bibinfo{address}{New York}.
  x+201 pages.
\newblock
\showISBNx{0-387-90361-5}
\newblock
\shownote{Reprint of the 1950 edition.}


\bibitem[\protect\citeauthoryear{Wu}{Wu}{2008}]%
        {wu_new_2008}
\bibfield{author}{\bibinfo{person}{Y. Wu}.} \bibinfo{year}{2008}\natexlab{}.
\newblock \showarticletitle{New {list} {decoding} {algorithms} for
  {Reed}-{Solomon} and {BCH} {codes}}.
\newblock \bibinfo{journal}{{\em IEEE Transactions on Information Theory\/}}
  \bibinfo{volume}{54}, \bibinfo{number}{8} (\bibinfo{year}{2008}),
  \bibinfo{pages}{3611--3630}.
\newblock


\bibitem[\protect\citeauthoryear{Zariski and Samuel}{Zariski and
  Samuel}{1960}]%
        {ZarSam60}
\bibfield{author}{\bibinfo{person}{O. Zariski} {and} \bibinfo{person}{P.
  Samuel}.} \bibinfo{year}{1960}\natexlab{}.
\newblock \bibinfo{booktitle}{{\em Commutative Algebra II}}.
\newblock \bibinfo{publisher}{Springer}.
\newblock


\end{thebibliography}

\end{document}